\theoremstyle{definition}
\theoremstyle{remark}
\newtheorem{remark}{Remark}[section] %\label{rmk:}
\theoremstyle{plain}
\newtheorem{theorem}{Theorem}[section]
\newcommand{\BigO}[1]{\ensuremath{\operatorname{O}\bigl(#1\bigr)}}
\newcommand{\bmath}[1]{\mbox{\boldmath$#1$}}
\begin{document}
%
% paper title
% Titles are generally capitalized except for words such as a, an, and, as,
% at, but, by, for, in, nor, of, on, or, the, to and up, which are usually
% not capitalized unless they are the first or last word of the title.
% Linebreaks \\ can be used within to get better formatting as desired.
% Do not put math or special symbols in the title.

\allowdisplaybreaks

\title{Redesigning Multipath TCP With Deep Q-Learning For Future Internet}
\title{Learning to Harness Bandwidth with Multipath Scheduling and Congestion Control}
\title{Learning to Harness Bandwidth with Multipath Congestion Control and Scheduling}
%
%
% author names and IEEE memberships
% note positions of commas and nonbreaking spaces ( ~ ) LaTeX will not break
% a structure at a ~ so this keeps an author's name from being broken across
% two lines.
% use \thanks{} to gain access to the first footnote area
% a separate \thanks must be used for each paragraph as LaTeX2e's \thanks
% was not built to handle multiple paragraphs
%
%
%\IEEEcompsocitemizethanks is a special \thanks that produces the bulleted
% lists the Computer Society journals use for "first footnote" author
% affiliations. Use \IEEEcompsocthanksitem which works much like \item
% for each affiliation group. When not in compsoc mode,
% \IEEEcompsocitemizethanks becomes like \thanks and
% \IEEEcompsocthanksitem becomes a line break with idention. This
% facilitates dual compilation, although admittedly the differences in the
% desired content of \author between the different types of papers makes a
% one-size-fits-all approach a daunting prospect. For instance, compsoc
% journal papers have the author affiliations above the "Manuscript
% received ..." text while in non-compsoc journals this is reversed. Sigh.
\author{Shiva~Raj~Pokhrel, \textit{Senior Member}, IEEE and Anwar~Walid, \textit{Fellow}, IEEE% \textit{Fellow, IEEE} %\vspace{-8 mm}% <-this % stops a space
\IEEEcompsocitemizethanks{\IEEEcompsocthanksitem S.~R.~Pokhrel is with the School of Information Technology, Deakin University, Geelong, Australia. A.~Walid is with Nokia Bell Labs, Murray Hill, NJ, USA. The revised version of this paper will appear in IEEE Transactions in Mobile Computing~\cite{pokhrelharness2021}.
% note need leading \protect in front of \\ to get a newline within \thanks as
% \\ is fragile and will error, could use \hfil\break instead.
Email: shiva.pokhrel@deakin.edu.au, anwar.walid@nokia-bell-labs.com
}% <-this % stops an unwanted space
%\thanks{Manuscript received April 19, 2005; revised September 17, 2014.}
}

\IEEEtitleabstractindextext{%

\begin{abstract}
Multipath TCP (MPTCP) has emerged as a facilitator for {harnessing and pooling available bandwidth in wireless/wireline communication networks and in data centers}.  Existing implementations of MPTCP such as, Linked Increase  Algorithm (LIA), Opportunistic LIA (OLIA) and BAlanced LInked Adaptation (BALIA) include separate algorithms for congestion control and packet scheduling, with pre-selected control parameters. We propose a Deep Q-Learning (DQL) based framework for joint congestion control and packet scheduling for MPTCP. At the heart of the solution is an intelligent agent for interface, learning and actuation, which learns from experience optimal congestion control and scheduling mechanism using DQL techniques with policy gradients. We provide a rigorous stability analysis of system dynamics which provides important practical design insights. In addition, the proposed DQL-MPTCP algorithm utilizes the `recurrent neural network' and integrates it with `long short-term memory' for continuously i) learning dynamic behavior of subflows (paths) and ii) responding promptly to their behavior using prioritized experience replay. With extensive emulations, we show that the proposed DQL-based MPTCP algorithm outperforms MPTCP LIA, OLIA and BALIA algorithms. Moreover, the DQL-MPTCP algorithm is robust to time-varying network characteristics, and provides dynamic exploration and exploitation of paths. 
\end{abstract}

% Note that keywords are not normally used for peer review papers.
\begin{IEEEkeywords}
Multipath TCP, Deep Q-Learning, Stability Analysis.
\end{IEEEkeywords}}
% make the title area
\maketitle
% To allow for easy dual compilation without having to reenter the
% abstract/keywords DATA, the \IEEEtitleabstractindextext text will
% not be used in maketitle, but will appear (i.e., to be "transported")
% here as \IEEEdisplaynontitleabstractindextext when the compsoc
% or transmag modes are not selected <OR> if conference mode is selected
% - because all conference papers position the abstract like regular
% papers do.
\IEEEdisplaynontitleabstractindextext
% \IEEEdisplaynontitleabstractindextext has no effect when using
% compsoc or transmag under a non-conference mode.
% For peer review papers, you can put extra information on the cover
% page as needed:

% \ifCLASSOPTIONpeerreview
% \begin{center} \bfseries EDICS Category: 3-BBND \end{center}
% \fi
%
% For peerreview papers, this IEEEtran command inserts a page break and
% creates the second title. It will be ignored for other modes.
\IEEEpeerreviewmaketitle

\section{Introduction} 
\label{sec:introduction}

\label{sec:introduction}
\IEEEPARstart CONGESTION Control design of both TCP and multipath TCP (MPTCP) is a fundamental problem in networking and has been widely investigated in the literature (see ~\cite{walid2016balanced, li2019smartcc, goyal2019abc, chiariotti2019analysis, pokhrel2018improving, nie2019dynamic, xu2019experience, raj2021rent} and references therein). Most of the seminal single-path congestion control protocols proposed are either \emph{delay-based} (e.g., TCP Vegas), or \emph{loss-based} (e.g., TCP NewReno), otherwise hybrid (both) (e.g., Compound TCP). All of these congestion control protocols use some packet-based events (loss, delay, etc.) as an indication for the congestion and perform their window adjustment (based on some fixed control mechanism) to control the number of outstanding packets in the network. 

To take advantage of the increasingly available wireless and wireline access technologies, and mutlipath and multi-homing {capabilities in the Internet and data centers}, Multipath TCP (MPTCP) has emerged as an enabler for harnessing and aggregating  available bandwidth and improve applications' performance. Raiciu \emph{et al.}~\cite{raiciu2011coupled} developed the MPTCP \emph{Linked Increase Algorithm} (LIA) congestion control that elegantly uses their specific window increase function to couple the congestion windows running on different subflows; however, it suffers from throughput inefficiency problem which was addressed by \emph{Opportunistic} LIA (OLIA)~\cite{khalili2013mptcp}. Packet scheduling in MPTCP determines which path, among available paths with open windows, to send a packet so that the need for packet re-sequencing at the receiver is minimized. The design of MPTCP packet scheduling has also been equally challenging~\cite{hurtig2018low, garcia2017low,  lim2017ecf, 8666496}. 

\textcolor{black}{As shown in Fig.~\ref{fig:my2}, MPTCP has a vital role to play in 5G and 6G networks~\cite{TS23401_MPTCP}; see the 3rd Generation Partnership Project (3GPP) 5G Release 17~\cite{TS23401_MPTCP} for further details. Since Release 15, the 3GPP introduced a new  Access Traffic Steering, Switching, and Splitting (ATSSS) feature~\cite{simon2020atsc}.}\footnote{\textcolor{black}{{\textit{Steering}}: choosing the best network depending on the user’s location and network conditions;
{\textit{Switching}}:enabling seamless handovers from 4G/5G to Wi-Fi (or vice versa) 
{\textit{Splitting}}: data transport over multiple paths for higher speeds.}} 

\begin{figure}[t]
    \centering
    \includegraphics[scale=0.35]{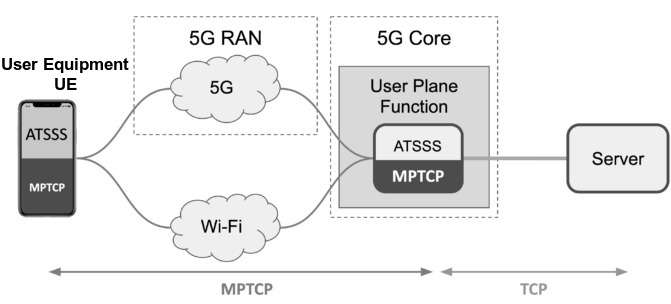}
    \caption{\textcolor{black}{5G Network-assisted MPTCP: An abstract view of ATSSS~\cite{TS23401_MPTCP}}}
    \label{fig:my2}
\end{figure}
\textcolor{black}{ATSSS is one of the prevalent use case of network-assisted multipath data transport. It has two main components (see Fig.~\ref{fig:my2}): User Equipment (UE) and 5G Core. The UE could natively use MPTCP while the User Plane Function may have a TCP to MPTCP proxy. The 3GPP has already defined the use of the 0-RTT TCP Convert Protocol for the MPTCP proxy, which addresses the proxy directly (IETF RFC-8803, July 2020)~\cite{bonaventure20170}. Another option relying on the activation of a dedicated function developed MPTCP \textit{concentrator}~\cite{boucadair2016mptcp}, where the identity is explicitly considered on the hosts.}

\textcolor{black}{One early very prevalent use of MPTCP is the SIRI digital assistant of Apple iPhone, which has been using MPTCP since 2013 (iOS 7) to allow Siri to use both the Wi-Fi or Cellular whatever is available and attain the lowest delay and high reliability. In 2019, MPTCP was also introduced by Apple in their Maps and Music. Samsung is another key player in the MPTCP framework that has been introduced in phones used by Korean Telecom, SK Telecom and LG+ since 2015 in South Korea. The goal here was to merge Wi-Fi and Cellular and use them to attain Gigabit speeds at the same time. From 2016, on customer premises devices such as DSL home gateways and/or LTE/4G modems, Tessares and Swisscom adopted MPTCP, mainly to aggregate the capacities of the fixed and mobile networks.}

By using fluid-flow model and dynamical system analysis, Peng \emph{et al.}~\cite{peng2016multipath} designed \emph{BAlanced LInked Adaptataion} (BALIA) to achieve a better balanced performance. However, all of the aforementioned congestion control approaches~\cite{peng2016multipath, pokhrel2018improving, raiciu2011coupled, khalili2013mptcp} designed MPTCP on the basis of \textit{model-based} frameworks, which are challenged in highly dynamic wireless networks (such as 5G\footnote{`Considerations for MPTCP operation in 5G' https://tools.ietf.org}~\cite{haile2020end} and beyond\footnote{`5G ATSSS solution', https://www.tessares.net}~\cite{lee2020deft, polese2017tcp,  pokhrellow}) as the underlying models do not apply equally well in various scenarios and the solutions can be in suboptimal regimes and unstable states.

To elaborate, the current MPTCP congestion control design mainly consists of two steps: i) developing mathematical models for resource allocation problem (e.g., fluid and/or queuing models~\cite{peng2016multipath, pokhrel2018improving}), ii) designing algorithms to solve the models (e.g., convex optimization based on its properties~\cite{peng2016multipath, pokhrel2018improving}). Such models, which are based on the Network Utility Maximization (NUM) or control-theoretic frameworks, typically lead to algorithms with fixed parameters. Such algorithms, however, may be unable to handle highly fluctuating and time-varying wireless channels and data paths.

Furthermore, in current model-based MPTCP design, a user's utility is normally a function of its instantaneous rate, yielding `bandwidth sharing' networks~\cite{pokhrel2019fair} which fits the case of infinite length flows. All the main single path TCP algorithms proposed in the literature have strictly concave utility functions, implying uniqueness and existence of a stable equilibrium and their convergence. However, the case of MPTCP is much more delicate: whether an underlying utility function exists depends on the design choice. It is worth noting that the utility function does not exist for some well-known MPTCP congestion control algorithms~\cite{peng2016multipath, pokhrel2018improving, raiciu2011coupled, khalili2013mptcp} (see~\cite[Tab.~1]{peng2016multipath}). Therefore, utility maximization with the standard MPTCP congestion control algorithms has been typically challenging and nontrivial. 

Moreover, the current MPTCP design does not provide applications with full access to the transport layer features (cross-layer)\footnote{Cross-layer approach refers to sharing information among TCP/IP layers for efficacious use of network resources and achieving high degree of adaptivity. It is an escape from the established TCP/IP layering of protocols.}, making it difficult to configure the transport layer behavior to take advantage of the available context and network information. Given that the file size is known, a natural alternative is to use, for a path, the optimal \emph{Shortest Remaining Processing Time} (SRPT) scheduling. However, naively applying SRPT to networks creates starvation for multipath flows. In contrast, one may plan to develop an optimization framework and/or cross-layer approach with i)  precise prediction of future values and ii)  accurate mathematical model to quantify the network behaviour precisely. But both i) and ii) are extremely challenging  and unscalable because of large state space (combinatorial in the number of packets in flights) over future wireless networks. Therefore, there is a need for robust experience-driven algorithms that can adapt intelligently to dynamic environments without escaping from the TCP/IP layering fundamentals.

\subsection{Literature}
Recently, real momentum has been building up towards learning-based TCP designs to exploit and handle dynamic network environments~\cite{dong2015pcc, xu2019experience, winstein2013tcp, zaki2015adaptive, li2019smartcc, mai2019self, liao2020precise, chung2017machine, silva2020avira, li2019hpcc, goyal2019abc}. Winstein \emph{et al.}~\cite{winstein2013tcp} proposed \emph{Remy}, a single path congestion control, which is capable of generating control rules for several different network settings. Authors in~\cite{dong2015pcc, dong2018pcc} developed a (single path) performance-based congestion control protocol, such that the source can continuously track and adopts actions timely for a higher network performance~\cite{lillicrap2015continuous}. Zaki \emph{et al.} \cite{zaki2015adaptive} proposed \emph{Verus}, a congestion control protocol that learns a correlation between end-to-end delivery delay of packets to update the congestion window~\cite{li2019hpcc,goyal2019abc}.  However, these works~\cite{winstein2013tcp, dong2015pcc, dong2018pcc, zaki2015adaptive, li2019hpcc, goyal2019abc}, are basically for designing single path congestion control protocols and are not applicable directly for MPTCP. %{It is known~\cite{raiciu2011coupled,peng2016multipath} that MPTCP suffers from serious performance issues when we simply apply a single path congestion control separately on each multipath subflows.}

%Of particular relevance to this work are the two multipath congestion control design~\cite{li2019smartcc, xu2019experience} using \textit{deep reinforcement learning} (DRL). Both of these designs focus on the congestion control design (\textit{do not take into account the packet scheduling}) and illustrate superior performance as compared to seminal MPTCPs (e.g., LIA, OLIA and BALIA).

More recently, advances in machine learning have been applied to the design of MPTCP~\cite{li2019smartcc, xu2019experience, mai2019self, liao2020precise, chung2017machine, silva2020avira, pokhrel2020multipath, 8666496}. Mai \textit{et al.}~\cite{mai2019self} applied deterministic policy gradient for learning the optimal congestion control strategies. Liao \textit{et al.}~\cite{liao2020precise} proposed a DRL-based multipath scheduler aiming at decreasing the out-of-order queue sizing under heterogeneous paths. Silva \textit{et al.}~\cite{silva2020avira} implemented an adaptive virtual reality with content-aware prioritisation to improve MPTCP's performance. 

Of particular relevance to this work are the two multipath congestion control designs~\cite{li2019smartcc, xu2019experience} using \textit{deep reinforcement learning} (DRL). Both of these designs~\cite{li2019smartcc, xu2019experience} focus on congestion control and do not address packet scheduling. Indeed, all of the aforementioned MPTCP designs~\cite{li2019smartcc, xu2019experience, mai2019self, liao2020precise, chung2017machine, silva2020avira, pokhrel2020multipath,8666496} considered either congestion control or packet scheduling separately (instead of jointly). It is known that the performance of MPTCP is significantly impacted by packet scheduling which decides what packet to send on which path among the available paths that have open window. Such a decision impacts the goodput (rate of packets delivered to the receiver in the right order) and the required resquencing delay. The advantage of the MPTCP controller in~\cite{xu2019experience}, however, is shown only when it is applied to a group of sources, rather than the usual and versatile case of a controller per source. In addition, the stability of the proposed algorithms~\cite{li2019smartcc, xu2019experience, mai2019self, liao2020precise, chung2017machine, silva2020avira} are not addressed, whereas stability is a  key desirable goal in the original MPTCP design philosophy~\cite{raiciu2011coupled}. Authors in~\cite{pokhrel2020multipath,8666496} introduced Deep Q Network (DQN) framework to enhance the MPTCP congestion control~\cite{pokhrel2020multipath} and packet scheduling~\cite{8666496} performance in asymmetric paths.
%\footnote{Theoretically, DRL-based learning mechanism, due to its bootstrapping, off-policy learning and function approximation (see~\cite{haarnoja2018learning, jacot2018neural} and related works) often diverges leading to instability.} 

%But both~\cite{li2019smartcc, xu2019experience} typically require bulk data sets (and sufficient time) before they attain reasonably good performance. It has been already reported that DRL during its learning phase always perform below average, and is not desirable for real-time network environment (useful mostly for simulations). In addition, the stability conditions for the proposed MPTCP algorithms~\cite{li2019smartcc, xu2019experience} remain poorly understood, which has been one major goal as discussed in MPTCP design philosophy~\cite{raiciu2011coupled}.

In this paper, we develop DQL-MPTCP, a Deep Q-Learning (DQL) based framework for joint congestion control and packet scheduling of MPTCP. At the heart of the solution is an intelligent agent, which learns from experience optimal congestion control and scheduling mechanism using DQL techniques with policy gradients. DQL-MPTCP is a situation-aware learning driven by reward maximization, where the reward is a controllable function representing application performance. Our framework, unlike prior works, employs a DQL agent for joint learning of scheduling and congestion control for each DQL-MPTCP connection. The two controls (congestion and scheduling) jointly and effectively work together towards achieving the target reward. Furthermore, our implementation setup is versatile to use, i.e. whether the sender is a device or a server, readily inter-works with the current Linux kernel implementation (and makes comparisons with standard MPTCP algorithms fair).
%Motivated by the MPTCP design philosophy, we develop an efficient and situation-aware learning framework to formulate a reward maximization problem in which users’ rewards can be defined by controllable desired functions, rather than theirinstantaneous rates. Different from~\cite{li2019smartcc, xu2019experience}, our framework employs an \textit{independent DQL agent} for joint learning of scheduling and congestion control for each DQL-MPTCP connection (consisting of several subflows). As a result, our DQL-MPTCP connections always learn at their own pace and are are self-healing, quite similar to other standardized MPTCPs \cite{peng2016multipath, pokhrel2018improving, raiciu2011coupled, khalili2013mptcp}(for comparability and tractability). Such a DQL agent designed neither shares nor does learn from the experiences of other MPTCP connections and agents in the dynamic network environment.}

 Unlike earlier model-driven MPCTP designs \cite{peng2016multipath, pokhrel2018improving, raiciu2011coupled, khalili2013mptcp}, our DQL-MPTCP utilizes runtime statistics of the underlying network environment. We combine the Deep Q-Learning (DQL) based control with policy gradients, as it is capable of effectively handling complex and dynamic state spaces. DQL-MPTCP learns dynamic system state and makes decisions without relying on any mathematical formulas or fixed control policies.  
DQL-MPTCP has to deal with mapping of traffic to multiple paths (subflows) efficiently. One may think of a simple approach and attach a DQL agent for each subflow independently, so as to separately conduct congestion control across all of the available paths. However, such approach is not fair to traditional single path TCP~\cite{raiciu2011coupled, peng2016multipath} and fails to capture the necessary and essential coupling of paths. DQL-MPTCP maximizes total performance reward by performing coupled congestion control and packet scheduling over all active paths/subflows of a connection. As will be discussed, we adopt a prioritized experience replaying concept to attain desired performance even with a limited data sets.

\subsection{Novelty and Contributions}

% In contrast to DRL based approach
In this paper, we exploit a setting where a DQL agent access datasets from the history of the system~\cite{hester2018deep}; using prioritized experience replay~\cite{schaul2015prioritized} we store experience tuple (\emph{current state, current action, reward, next state}) in the memory, and then sample most important transitions from the memory more frequently.\footnote{For scalable design (i.e. memory requirements), we consider deleting the tuples after sampling and maintain only the samples.} As the users could be mobile and the number of available network paths (subflows joining and leaving) changes over time, existing DRL and convolution neural network approximation framework. To achieve an MPTCP algorithm that leverages few samples of data to accelerate the online learning even from relatively small datasets, we exploit a \emph{recurrent neural network} integrated with the \emph{long short-term memory} (LSTM) as a replay buffer, and continuously capture  packet flow dynamics. More specifically, our design is capable of automatically assessing and selecting the essential data sets from replay during learning. 

%\subsection{Contributions}
Our main contributions in this article are as follows:

\begin{itemize}
    \item [C$_1$] We propose a new approach for designing coupled multipath congestion control and packet scheduling using Deep Q-Learning with policy gradients.
    \end{itemize}
    %In contrast to traditional MPTCP designs, this new direction in designing multipath TCP performs both packet scheduling and congestion control. Moreover, it facilitates desired reward-maximization based control, where performance reward can be selected to be a function of packet goodput (rate of in-order received packets) and delay.
    %In contrast to standard MPTCP designs, this new direction in designing multipath TCP performs both packet scheduling and congestion control, where the two controls (congestion and scheduling) jointly and effectively work together towards maximizing performance reward that can be selected to be a function of packet goodput (rate of in-order received packets) and delay.
    In contrast to standard MPTCP designs, this new direction in designing multipath TCP performs both packet scheduling and congestion control. The two controls (congestion and scheduling) jointly and effectively work together towards maximizing the performance. Moreover, the aggregate reward can be selected as a function of packet goodput (rate of in-order received packets) and overall delay.
    
    \begin{itemize}
    \item [C$_{2}$] We develop an integrated `\emph{Policy Gradient with Deep Q-Learning}'  framework for multipath TCP, that uses an intelligent agent with the objective of maximizing a desired performance reward and performing dynamic congestion control and scheduling across all available network paths.  
    \end{itemize}
    Accommodation of accurate state and fine-grained actions results in a large state-action space; therefore, we integrate policy gradient algorithm with an actor-critic mechanism. To respond to highly dynamic networks, we exploit the agent`s policy distribution to average the value function over a set of actions, rather than only sampled action values. %In contrast to the optimization based framework for designing MPTCP, our approach is to successfully integrate  policy   gradient   algorithm with Actor-Critic,  for providing the  discrete actions of the continuous states.  For the capability to respond to erratic network dynamics, we exploit  agent`s policy distribution to average the value function over all actions, rather  than  using  the  action values  of  only  the  sampled  actions, which facilitates the desired (e.g., reward maximal) control. 
    
    To elaborate, when a connection is first initiated the end hosts {have limited updated knowledge of the characteristics of the path. \textcolor{black}{Typically, information is confined to that gleaned from the initial connection handshake and also from cached historical information. As the connection proceeds, feedback is obtained from packet transmissions but this feedback is delayed. The reason being 25-50 ms typical Round-Trip Times (RTTs) over local wireless internet paths, which can correspond to hundreds of packets `in flight' (around 150ms for international paths). Forward error correction has been used~\cite{8450632, pokhrellow} to ameliorate such circumstances. Short connections (i.e. small TCP flows), thus have limited information as to the path characteristics. Longer connections, e.g., bulk TCP flows, need to learn the path characteristics on the fly while transmitting packets. Our new design is to ameliorate the delay in learning path characteristics.} 
 \begin{itemize}    
     \item [C$_{3}$]  \textcolor{black}{In our design of DQL-MPTCP, we  utilize long short-term memory (LSTM) \textit{recurrent neural network} for tracking and learning of dynamic behaviour of wireless/wireline paths,  and appropriately adjusting congestion control and packet scheduling.}
     \end{itemize}
 \textcolor{black}{We investigate the design of intelligent schedulers that assign packets to paths in such a way as to minimise the reordering delay at the receiver. Key aspects that we address include a delay-aware scheduling, the impact of connection length of the optimal scheduler strategy and the need for joint scheduling of transmissions and learning of path characteristics (see Sec.~\ref{remark1} for details).}

\begin{itemize}
     \item [C$_{4}$] We conduct stability analysis of DQL-MPTCP, which provides important practical insights for the protocol design (Sec. IV). 
\end{itemize}

\begin{itemize}
     \item [C$_{5}$] We evaluate DQL-MPTCP and compare its performance with the standard MPTCP algorithms. Our comparison demonstrates significant  performance improvements with DQL-MPTCP. %DQL-MPTCP is fair to TCP and responsive to time-varying network characteristics. 
\end{itemize}

%The remainder of the paper is structured as follows. In Section \ref{sec:network-model} we explain the underlying mechanism of MPTCP, and discuss our heterogeneous network scenario in detail. Our model (including its queueing mechanism, wireless transmission of packets, and their interactions) is described in Section \ref{sec:analytical-model}. Results from our analytic model are validated by using ns-2 simulation~\cite{ns2} in Section \ref{sec:results}. Our novel adaptive MPTCP design is presented in Section \ref{sec:adaptiveMPTCP}, including a thorough evaluation as well as stability/convergence analysis. Section \ref{sec:conclusion} concludes the paper, with the convergence proof of the adaptive MPTCP deferred to the appendix. 

\section{Preliminaries and Motivations}

Multipath congestion control and packet scheduling mechanisms require continuous control action. However, it is impossible to apply Q-learning to continuous control action directly, because in continuous spaces, finding a greedy policy requires an optimization of control action $\bmath a_t$ at every time stamp. In fact, this type of optimization is sluggish in nature and hard to implement in practice with unconstrained approximates and large  (control action) state  spaces. Therefore, we use a deterministic policy gradient algorithm with an actor-critic approach. Such gradient-based algorithm maintains an actor (function) which can specify the policy by mapping states to a (specific) control action directly.  The critic can be learned by using the Bellman equation (similar to Q-learning). By applying the chain rule to the expected utility, the actor can be modified from an initial distribution with respect to the parameters. %To elaborate, we summarize our motivations for aforementioned approach as follows.

%\subsection{Deep Q-Learning with Policy Gradients}
It is well-known that using a nonlinear approximation such as (deep) neural network for reinforcement learning is not desirable~\cite{li2019smartcc, xu2019experience}. In fact, such a nonlinear approximation is unstable and may lead to a diverging state~\cite{haarnoja2018learning, watkins1992q, achiam2019towards}. The pioneering work on the Deep Q-networks by Mnih \emph{et al.}~\cite{mnih2015human} laid the foundations for solving complex decision problems; their idea of combining experience replay with Q-learning and convolutional neural network enabled the framework to learn and perform intelligently. This approach not only avoids divergence and oscillations, but also is comparable to humans in making multilevel decisions. We exploit the ideas underlying the success of DQL~\cite{lillicrap2015continuous} in our continuous multipath congestion control and scheduling by combing it with policy gradients.%\footnote{In a different context, Lillicrap \emph{et al.}~\cite{lillicrap2015continuous} proposed an actor-critic, model-free algorithm based on the deterministic policy gradient that can operate over continuous action spaces.} %With a learning capability for responding to fluctuating wireless networks, we design an algorithm to solve the unresolved packet scheduling problem across multiple (heterogeneous) wireless paths. Such design to maximize network reward, outperforms pioneering MPTCP algorithms (with full access to the network dynamics), by promptly finding optimal policies. %We demonstrate that for many of the multipath scheduling problems, our paradigm outperform \emph{end-to-end  congestion control} and (directly) learn from users.

%\subsection{Long Short-term Memory (LSTM)}
Recurrent neural networks exhibit a high capability of modeling nonlinear time series problems in an effective way; however, there are some issues to be addressed. In particular, they are  unable to train time varying network lags that are common in future wireless networks, and rely on a predetermined time duration for learning making them inflexible. To overcome the aforementioned shortcomings of the recurrent neural networks, we use LSTM, that act as a prioritized experience replay buffer for continuously tracking the active subflows and their interactions with the networks.
%there are some issues to be addressed~\cite{ma2015long}:
% \begin{itemize}
%     \item[$\circ$] \emph{Recurrent neural networks} are unable to train time varying network lags, while this is very common in future wireless networks.
    
%     \item[$\circ$] They rely on a predetermined time duration to learn time-varying network behaviors, which makes it difficult to find the optimal window size  (and optimal schedule) automatically.
% \end{itemize} 
\subsection{Integrated Learning Framework}

We start by employing a Markov Decision Process (MDP), a 5-tuple  ($\mathbb S(t)$, $\mathbb A(t)$, $\bmath R(t)$, $\bmath T(t)$, $\gamma$), where $\mathbb S(t)$ is a finite set of states, $\mathbb A(t)$ is a finite set of actions, $\bmath R(t)$ is a reward function, $\bmath T(t)$ is the transition function ($\bmath T(t)$ determined from the probability state distribution $\bmath P(t)$) and $\gamma$ is a discount factor. In every state $s(t)\in\mathbb S(t)$, the agent takes an action $a(t)\in \mathbb A(t)$, receives a reward $\bmath R(a(t),s(t))$, and attains new state $s(t+1)$ obtained from the probability distribution $\bmath P(s(t+1)|s(t),a(t))$. The important notations used in our design and algorithms are summarized in Table~\ref{tab:symbols}.

\begin{table}
\caption{Summary of Notations. \label{tab:symbols}}
\begin{tabular}{l| p {6cm}}
\hline
\bf Notation & \bf Description \\ \hline
%\multicolumn{2}{|l|}{\bf Inputs} \\ \hline
%$M$ &Total number of  users or stations  \\
$\mathbb N(.)$, $\mathbb E(.) $, $\bmath Q(.)$& Representation $\mathbb N(.)$, Actor $\mathbb E(.) $ and Critic $\bmath Q(.)$\\
$\hat{\mathbb N}(.), \hat{\mathbb E}(.), \hat{\bmath Q}(.)$ & Target Representation, Actor and Critic Networks; Replicate the structure of $\mathbb N(.), \mathbb E(.)$ and $\bmath Q(.)$\\
$\mathbb S(t)$& Finite set of states, $s(t)\in\mathbb S(t)$, new state $s(t+1)$, $\mathbb S(t)=\{s(t,n)\}$ indicates subflow on link $n$\\
$\mathbb A(t)$& Finite set of actions, $a(t)\in \mathbb A(t)$\\ 
$\bmath R(t)$ & Reward function, a reward $\bmath R(a(t),s(t))$ from distribution $\bmath P(s(t+1)|s(t),a(t))$\\
$\bmath T(t)$ & Transition function determined from the probability state distribution $\bmath P(t)$\\
$\gamma$ & Discount factor.\\ 
$Q^{\star}(s,a)$ & Estimate of the expected future reward \\ 
$\Pi^{\star}$& Optimal policy  (greedy policy for $Q^{\star}$)\\
$\mathbb F(t)$& Final state for all subflows of the connection, ultimate final state $\bmath f^N(t)$\\
$\bar{\bmath\tau}=\{\bar\tau_n\}$&RTTs of all active subflows\\
$\mathbb W(t)=\{W_n\}$ & Size of the congestion windows\\ 
$U(t,n)$& Utility of the subflow $n$ at time $t$\\
$\hat{\bmath a}(t)$& Control action for
the target subflow\\ 
$y_i$& Target for the Critic $\bmath Q(.)$ \\\hline
\end{tabular}
\vspace{-5 mm}
\end {table}

We have a policy that specifies for every state which action the agent can take, where the aim of the agent is to discover the policy that potentially maximizes the expected reward. For tractability~\cite{winstein2013tcp}, we consider 6-tuple (\textit{sending rate, throughput, RTT, change in window, schedule, difference in RTT}) as the state of an MPTCP connection.\footnote{In our experiments we have observed that adding more parameters, increasing data sampling complexity with no noticeable improvement in the performance of the agent.} 

Our DQL-MPTCP takes action on each MPTCP subflow and provides what change (window increase, decrease and/or packet schedule change) needs to be made concurrently to the congestion windows and packet scheduling for the subflows.

 An estimate of the expected future reward that can be obtained from ($s(t),a(t)$) is given by the value (using Bellman equation)~\cite{hester2018deep}
\begin{equation}\tag{A1}
Q^{\star}(s,a)=\underset {s'\sim P} {\mathbb E}\big[R(s, a, s')+\gamma \max_{a'}Q^{\star}(s', a') \big],
\end{equation}
 The optimal policy, denoted by $\Pi^{\star}$ can be obtained as a greedy policy with respect to $Q^{\star}$:
\begin{equation}\tag{A2}
\Pi^{\star}(s)=\arg \max_{a'}Q^{\star}(s, a).
\end{equation}
It has been known that with respect to $Q^{\star}$ approximator, algorithm approximates greedy policy as the optimal policy. Our prioritized experience replay technique mandates the DQL agent to sample more frequently the most important state transitions from the memory. For example, we consider the likelihood of sampling a transition proportion to its priority in the spirit of those underlying \emph{learning from demonstrations}~\cite{lillicrap2015continuous, hester2018deep}.  Therefore, with the prioritized experience replay~\cite{schaul2015prioritized}, we develop a mechanism to attain comparatively better performance  even with a limited data sets and learning time.
 
%  \begin{remark}[Intelligent Packet Scheduling]
%  \label{remark1}
% In this work, intelligent scheduling is used in a sense that we can make efficient use of erratic wireless links with highly fluctuating per packet delay while facilitating controlled quality of service.  When scheduling packets across subflows, we move from usual consideration of packets individually to a new consideration of  collections of packets subject to an overall delivery delay.}
\subsection{Joint Scheduling and Congestion Control}
\label{remark1}
%A real momentum in recent years been developing to maximize the performance of MPTCP via separate mechanisms for packet scheduling and congestion control. A few MPTCP algorithms have already been proposed; see~\cite{hurtig2018low, garcia2017low,  lim2017ecf, peng2016multipath, chiariotti2019analysis, khalili2013mptcp, pokhrel2018improving, pokhrellow}  and references therein for multipath congestion control~\cite{peng2016multipath, chiariotti2019analysis, khalili2013mptcp, pokhrel2018improving, pokhrellow} and~\cite{hurtig2018low, garcia2017low,  lim2017ecf} for the packet scheduling. Nevertheless, these multipath scheduling and congestion control approaches~\cite{hurtig2018low, garcia2017low,  lim2017ecf, peng2016multipath, chiariotti2019analysis, khalili2013mptcp, pokhrel2018improving, pokhrellow}  are still far from optimal in an ever-changing and lossy wireless networking settings because of their (fixed policy) loss-based \emph{ Multipath Congestion Control} and the complication in \textit{Packet Scheduling} over time-varying heterogeneous paths. 

A real momentum in recent years has been in the development of packet scheduling and congestion control mechanisms which aim to maximize the performance of MPTCP. See~\cite{peng2016multipath, chiariotti2019analysis, khalili2013mptcp, pokhrel2018improving, pokhrellow} for multipath congestion control~\cite{peng2016multipath, chiariotti2019analysis, khalili2013mptcp, pokhrel2018improving, pokhrellow, pokhrel2020multipath, 8666496} and~\cite{hurtig2018low, garcia2017low,  lim2017ecf} for the packet scheduling. Perhaps the most known scheduling policy is the one adopted by MPTCP scheduler~\cite{raiciu2011coupled}, the \textit{min--RTT},  which assigns packets to the smallest RTT path and fills its window, and so on to the other smallest RTT paths. 

\textcolor{black}{The time taken for a packet to traverse a mobile network path is always stochastic and time-varying (\textit{variable path delay}) as a result of queueing, other flows sharing the path, wireless link layer retransmissions etc.  When packets are sent via multiple paths they therefore can easily arrive at the destination reordered in which case they need to be buffered until they can be delivered in order to higher layers, leading to {head-of-line blocking}.  Moreover, the resulting buffering delay in MPTCP can be substantial~\cite{pokhrellow, 8450632, pokhrel2018improving}, to the point where it largely undermines the throughput gain from use of multiple paths.}\footnote{\textcolor{black}{In existing works~\cite{pokhrellow, 8450632, pokhrel2018improving}, the problem of \textit{variable path delay} was partially addressed by employing delay adaptation~\cite{pokhrel2018improving} and forward error correction~\cite{pokhrellow, 8450632} at the MPTCP source.}}

%\footnote{The \textit{min--RTT} scheduler often assigns packets to the path with the smallest RTT and fill its window and then it sends packets to the other smallest RTT path, and so on. When all the windows are completely backlogged, it will wait and assign packets to a path once the is space available. Likewise, this scheduler not only re-schedules packets that are causing HoL blocking on one path to another path, but also penalizes the path that causes the blocking by (multiplicative backoff) decreasing its window~\cite{raiciu2011coupled}.}
In this work, we propose a new intelligent \textit{Packet Scheduling} policy coupled tightly with multipath congestion control, with a notion that for controlled performance there is a potential to efficiently utilize the time-varying wireline and wireless links even when per-packet delays are highly fluctuating. To attain this, we have to move from the usual notion of considering packets individually to a new consideration of jointly scheduling collections of packets subject to an overall minimal delivery delay from all active subflows. 

This new notion of scheduling is driven by our observation that the QoE (quality of experience) requirement is usually to transmit application layer entities such as video clips, icons, web pages, micro-blogs, etc. with the lowest attainable delay, and \textit{it is the overall aggregate delay rather than the per-packet delay which is critical} in such context. This observation has fundamental implications in our packet scheduler design. We are scheduling packets in groups; therefore, such scheduling is capable of considering application layer contexts and their relationship to path uncertainties.

The packet scheduling policy in this work is coupled tightly with the multipath congestion control mechanism. Our scheduler assigns a number of packages to each path proportional to their short term average goodput (\textit{number of successful packets per RTT}).\footnote{The transient behaviour, i.e.,  the trajectories through which short term average goodputs ($\theta_n(t)$) approach their steady values, is essential for the computation of exact reordering delay. However, we are concerned with the difference in short term steady goodputs, and our interest lies in designing a solution that can alleviate the severe reordering delay perceived at the application layer. } Given the availability in the windows (specified by the multipath congestion control process), our scheduler always assigns packets to the paths proportionally, ranging from the largest goodput path to the small goodput path. Such a scheduling policy has the  effect of minimizing packet re-sequencing at the receiver~\cite{saha2019musher}.

\section{The Proposed MPTCP Scheme}
\label{sec:design}
In this section, we present the design of our \emph{Deep Q-Learning enabled MPTCP with policy gradients} for joint congestion control and packet scheduling across paths of different characteristics. At a high level, the interactions between modules of the proposed DQL-MPTCP are as illustrated in Figure~\ref{fig:MPTCP-blocks}. A  detailed  explanation  of  actor critic training, representation network and subflow state {analysis} is provided with algorithms in Section 3.1-3.3. As discussed in Section 3.3, the DQL agent  (submodules inside the dashed portion in Figure~\ref{fig:MPTCP-blocks})
of the proposed MPTCP interacts with the links and the user to collect the freshest information of the state $\mathbb S(t)=\{s(t,n)\}$ here, $n$ indicates an MPTCP subflow on the link $n$. %(see Figure~\ref{fig:MPTCP-blocks}). 
At the beginning of time  slot $t$, the agent computes the reward, $\bmath R(t)$ (the sum of utilities of all active subflows from previous actions), by using the actor-critic network (Sec.~\ref{sec:accrt}). The computation is based on the representation learned (Sec.~\ref{sec:netrep}) by the LSTM and the current state of the subflows $\mathbb S(t)$ (final states $\mathbb F(t)$ and $\bmath a(t)$  are used to compute $\bmath R(t)$).  

The DQL-agent is queried periodically by the MPTCP source (one query per slot), in order to i) update the size of the windows ($\mathbb W(t+1)$), ii) update the RTTs ($\bar{\bmath\tau}=\{\bar\tau_n\}$) of active subflows and iii) schedule packets to the subflows (\textit{schedule}, for next slot, see Figure~\ref{fig:MPTCP-blocks}). Rather than using only the actions that were actually executed, we consider a policy gradient algorithm that uses the agent's explicit representation of all action values to estimate the gradient of the policy. Therefore, our framework implements
the action (via packet scheduling and congestion control at
the MPTCP source in the kernel) by observing the reward $R(t)$. To ensure stability and maintain replay buffer, %(see Figure~\ref{fig:MPTCP-blocks}), 
we have used target Network Representation $\mathbb{\hat{N}}(.)$ Actor $\mathbb{\hat{E}}(.)$ and Critic  $\bmath{\hat{Q}}(.)$, which replicate the structure of their corresponding networks in the state analysis, i.e., Network Representation $\mathbb N(.)$, Actor $\mathbb E(.) $ and Critic $\bmath Q(.)$ respectively.

\noindent
\begin{figure}[t]
\centering
\includegraphics[width= 2.65 in]{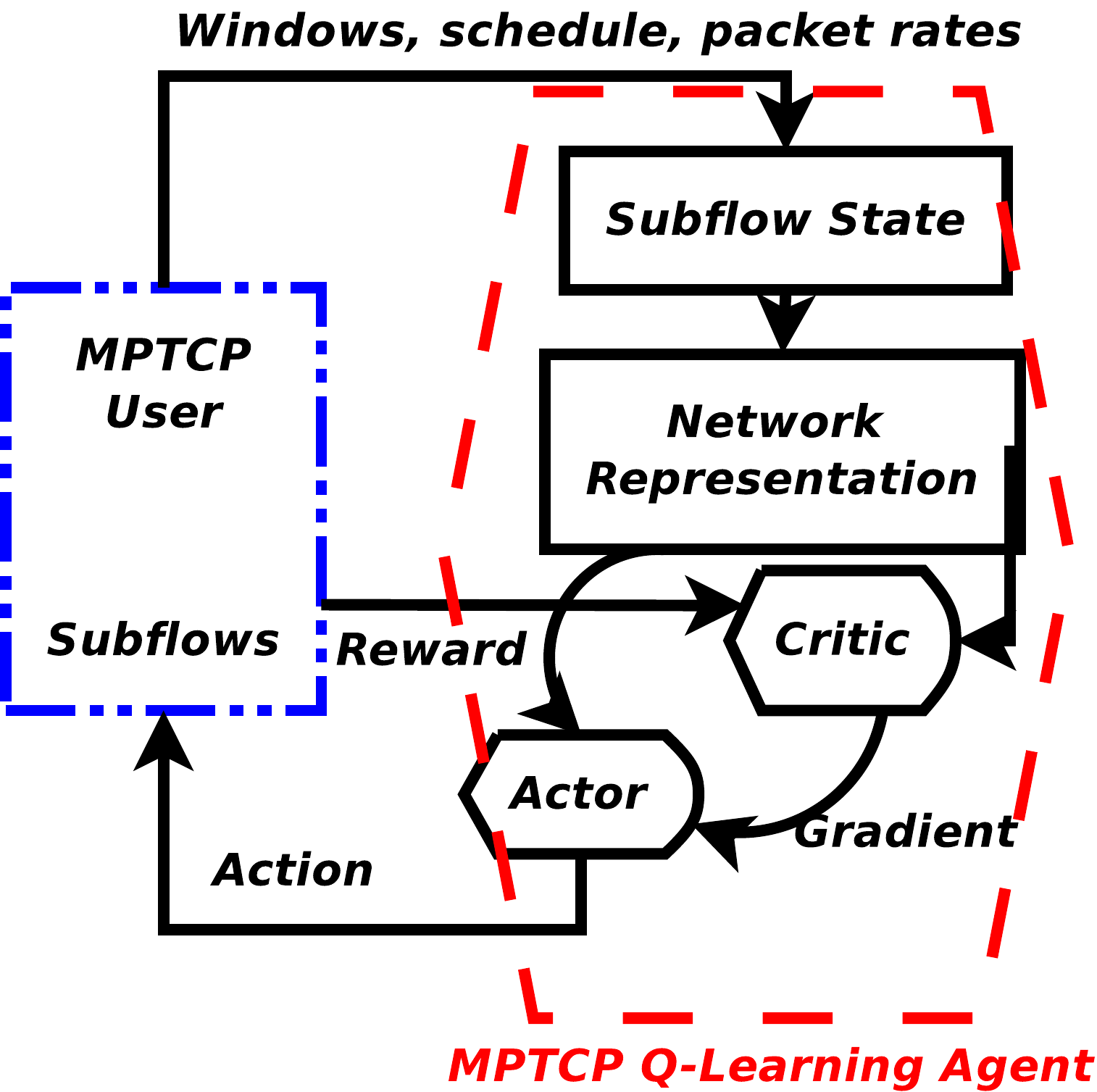}
\caption{\rm A abstract view of interrelationship between the DQL-MPTCP modules. The modules inside the dashed region (form a DQL Agent) take  packet rates, reward, congestion windows as input and generate control actions as outputs. A module on the left rectangle (is a MPTCP user with source program at the OS kernel), which provides new packet rates, schedule and congestion windows. \label{fig:MPTCP-blocks}}
%\vspace{-5 mm}
\end{figure}

A key feature of DQL is its use of a target network whose purpose is to stabilize the learning process. In traditional Q-learning, the value of executing an action in the current state is updated using the values of executing actions in the next state. This process can be unstable since the values on both sides of the update equation can change at the same time. The target network keeps a copy of the estimated value function to serve as a stable target for a few future steps~\cite{mnih2015human}. The main building blocks in Figure~\ref{fig:MPTCP-blocks} and  our algorithms are explained next.

\subsection{Network Representation}
\label{sec:netrep}
The Network Representation module creates a (representation) vector simply by observing the states of MPTCP subflows ($\bmath s(t,n)$ as inputs). However, it is challenging to deal with the representation when the number of subflows are time-varying (joining and leaving the MPTCP connection), which will be common in future wireless networks. Rather than using fixed input size Deep Neural Network approach (e.g., feed forward), we use LSTM which has the capability to handle variable number of inputs over time so as to capture the dynamics of varying number of MPTCP subflows (corresponding to the conditions of the available network paths and access interfaces). The approach is to provide the states of the subflows one by one into the LSTM for learning the Network (and Delay) Representation in a sequential manner.

The ultimate final state $\bmath f^N(t)$,  the output of this module (we denote $\mathbb F(t)$ for all subflows of the connection), is then fed for Actor-Critic training. %The average RTTs for each subflows ($\bar{\bmath \tau}$) is fed to the source program to compute the sending rates $\mathbb A(t)$ for the next slot.
Furthermore, we propose to train both `LSTM-based representation' and `Actor-Critic Network' jointly by using backward propagation. We have an inner loop between three modules, viz. \emph{subflow state analysis $\rightarrow$ network representation $\rightarrow$ actor-critic training}. It is worth noting that jointly training the interrelated modules (by observing the state of the subflows) yields  improved performance rather than training them individually. 
% \begin{remark} An easier method of using Feed-Forward Neural Network (FFNN) is to perform zero padding for network paths which are not active any instants. We performed experiments with this approach and observed that it is quite ineffective when the number of subflows are highly varying, i.e., transition from small number of subflows (excessive zero padding which is ineffective) to large number of subflows (exclude some of the subflows, leading to partial representation). This justified the use of LSTM.
% \end{remark}

\subsection{Actor-Critic Training}
\label{sec:accrt}
The actor network  consists of completely connected
LSTM with two hidden layers,
that consists of 128 neurons in both layers. The critic is very similar to the actor except of the output layer with a 
single linear neuron. We use Rectified
Linear function in hidden layers and
hyperbolic tangent function in the
output layer for activation. %The actor and critic are trained jointly using Adam optimizer (\emph{learning rates are set to 0:001 and 0:01 respectively and discount factor $\gamma= 0.95$}). To implement the neural network, we use a higher-level API to TensorFlow (\emph{http://tflearn.org}) and construct the aforementioned three neural networks.

The final state $\mathbb F(t)$ produced by the LSTM-based representation is combined with the state of the corresponding MPTCP subflow, which is then supplied to the Actor-Critic network. % Then our Actor-Critic method uses both networks (critic and actor) to develop an ultimate operation.
The agent specifies at each time $t$ how to update the size of the congestion window ($\mathbb W(t)$) and changes schedule (if any) for each subflows of the MPTCP connection in order to maximize the reward. %In particular, to optimize the collective update decision for all subflows, the reward of the network is computed as follows (a reward from the proposed framework).

The reward in our framework is the sum of the utilities of all active subflows belonging to the MPTCP connection, i.e., $ R(t):=\sum_n^N U(t,n)$ where, $U(t,n)$ is the utility of the subflow $n$ at time $t$. It is worth noting that our framework is flexible to define and handle desired utility function (e.g., fairness-based~\cite{pokhrel2019fair} or, loss, delay, and throughput-based~\cite{dong2018pcc}) and scheduling policy.

\textcolor{black}{Resources can be allocated amongst the competing flows according to various policies. To perform MPTCP congestion control and scheduling, we aim to facilitate applications with efficiency and fairness. Although there is no global notion of fairness, proportional and max-min fair allocations~\cite{kelly1998rate, mazumdar1991fairness} are the two well-known schemes.}

\textcolor{black}{\noindent \textit {Proportional Fairness~\cite[Eqn.(2.10), pp. 37]{pokhrel2017modeling}: For a set of users $\mathcal U = \{1, 2, . . . , n\}$ the goodput allocation [$\theta_1^*, \theta_2^*, . . . , \theta_n^*$] to the set of users $\mathcal U$ is proportionally fair if and only if for any other feasible scheme
[$\theta_1, \theta_2, . . . , \theta_n$], such that}}

\begin{equation}
   \textcolor{black}{\sum_{u\in\mathcal U} \frac{\theta_u-\theta_u^*}{\theta_u^*}\leq 0.}
   \label{eqn:prpp}
\end{equation}

\textcolor{black}{In~\eqref{eqn:prpp}, when the proportional change in one user's packet rate is positive, at least another user for which the change in packet rate is negative.}

In our implementation, we have used the well-known proportional fairness approach for all subflows (active paths) and maximize the function $U(t,n)=log \theta_n(t)$, where $\theta_n(t)$ is the short-term average goodput along path $n$ in the previous slot perceived by the subflow. For delay minimization, our scheduling policy (recall Section~\ref{remark1}) implements batch scheduling of packets to the subflows with a rate proportional to their short-term average goodputs.

\begin{remark}
Despite its empirical success and advancement, the theoretical perception behind the convergence of the actor-critic algorithm is lagging. Actor-Critic learning can be viewed as an alternating online bilevel optimisation process, the convergence of which is known to be fragile. In Sec.~4, we concentrate on applying DQL agent regulators, a basic but essential determination in Q-learning with gradients, to understand the learning dynamics' instability. Our aim in this work is to design a flexible framework, the stability of which is independent of the underlying utility function. Therefore, in this setting, we analyse a non-asymptotic convergence of the DQL-learning and demonstrate how it seeks a linear rate of convergence that is globally optimal. Our stability analyses in Sec.~4 can be the first step to thoroughly comprehend bilevel optimisation for the utility-based actor-critic problems, which will be investigated in future (in the worst case this can be NP-hard, non-convex and is mostly overcome with heuristics). 
\end{remark}
\begin{algorithm}[!t]
\caption{Training Algorithm}
\label{Training algorithm}
\begin{algorithmic}[1]
\Procedure{ Training} {}\\
Sample transitions ($\mathbb S_i, \bmath{{a}}_i, \bmath R_i, \mathbb S_{i+1} $) from priority replay\\
Compute $\bmath f^N_{i+1}$ using $\mathbb{\hat{N}}(\mathbb S_{i+1})$\\
Evaluate target for $\bmath Q(.)$, ${\footnotesize y_i= \bmath R_i+\gamma \bmath{\hat{Q}}(\bmath f^N_{i+1}, \mathbb{\hat{E}}(\bmath f^N_{i+1}) )}$\\
Critic parameters $\min(loss)$, ${\footnotesize 1/k\sum_i^k(y_i-\bmath Q(\bmath{{a}}_i, \bmath {f^N_{i}} ))^2}$\\
Compute Policy Gradient from Critic: $\bigtriangledown_a\bmath Q(\bmath{{a}}_i, \bmath {f^N_{i} })$\\
Update actor parameters, ${\footnotesize 1/k\sum_i^k\bigtriangledown_a\bmath Q(\bmath{{a}}_i, \bmath {f^N_{i}}). \bigtriangledown_{k}\mathbb E(\bmath{f^N_{i}})}$\\
Compute Policy gradient from Actor: $\bigtriangledown_k\mathbb E(\bmath{f^N_{i}} )$\\
Update network representation (policy gradient),\ \ \ \ \  \ \;\;\;\; ${\footnotesize {1/k\sum_i^k\bigtriangledown_{\bmath{a}}\bmath Q(\bmath{{a}}_i, \bmath {f^N_{i}}). \bigtriangledown_k\mathbb E(\bmath {f^N_{i}}).\bigtriangledown_{\bmath R_i}\bmath R(\mathbb{S}_{i+1}}) }$
 {\EndProcedure}
\end{algorithmic}
\end{algorithm}

\subsection{DQL-MPTCP with Policy Gradients}
\label{sec:algo12}
Our DQL-based multipath congestion control and scheduling mechanism is illustrated in Algorithms~\ref{Training algorithm} and~\ref{MPTCPalgorithm}. Algorithm~\ref{Training algorithm} illustrates the training process. Algorithm~\ref{MPTCPalgorithm} initializes the parameters of Representation Network  $\mathbb N(.)$, Critic networks $\bmath Q(.)$ and  Actor networks $\mathbb E(.) $. In order to ensure stable learning (see step 3, Algorithm~\ref{MPTCPalgorithm}) , we have used target networks $\mathbb{\hat{N}}(.)$ $\mathbb{\hat{E}}(.) $ $\bmath{\hat{Q}}(.)$, which replicate the structure of their corresponding networks, i.e., $\mathbb N(.)$, Actor $\mathbb E(.) $ and Critic $\bmath Q(.)$ respectively.

 For the convergence of the learning process, the target network parameters are updated using a smallish control parameter (0.001, see steps 17-19, Algorithm~\ref{MPTCPalgorithm}), such that the target parameters are adapted slowly in each iteration--recall that the target network, by design, needs to update slowly for stability.\footnote{The rationale for selecting smallish control parameter values comes from our stability analysis in Section~\ref{sec:stab}.} Our MPTCP DQL agent runs all the time, listening for periodic queries from the MPTCP source (implemented in the kernel). %Observe that the main part of our Algorithm~\ref{MPTCPalgorithm} consists of infinite looping (until MPTCP connection terminates). 
 
 To enable exploration, which is useful when training inexperienced DQL agent, 
we add correlated noise, using the Ornstein-Uhlenbeck stochastic process~\cite{lillicrap2015continuous, hester2018deep} to control actions in run-time. Such coupling of prioritized experience replay with exploration helps the DQL agent improve its actions of adjusting subflow window sizes and packet scheduling.\footnote{An effective way to conduct retraining of the agent to adapt for the new network settings will benefits from transfer learning~\cite{pokhrel2021multipath} and needs further investigations in future.}

Recall that the actor and critic networks  are completely connected
LSTM with two hidden layers,
that consists of 128 neurons in both layers. We use Rectified
Linear function in the hidden layers and
hyperbolic tangent function in the
output layer for activation. The  critic and actor networks
are trained jointly using Adam optimizer (\emph{we set learning rates to 0.001 and 0.01 respectively and discount factor $\gamma= 0.95$}).%\footnote{In our implementation, we use an API to TensorFlow (\emph{http://tflearn.org}) setup and construct the three neural networks.}

The final state network representation of all active subflows $\mathbb{F}(.)$ is derived from the representation network $\mathbb{N}(.)$ (see step 8, Algorithm.~\ref{MPTCPalgorithm}), and the control action for
the target MPTCP subflow $\hat{\bmath a}(t)$ is computed by using the actor $\mathbb{E}(.)$ (see step 9,  Algorithm.~\ref{MPTCPalgorithm}). First, the generated transition samples are stored into kernel (memory), thereafter they are randomly sampled for training the tuple
(\emph{network representation $\mathbb N$, actor $\mathbb E$, critic $\bmath Q$}) jointly by using $k$  prioritized samples (step 2, Algorithm.~\ref{Training algorithm}). The critic is a \emph{Deep Q-Learning Network} and its parameters are updated by minimizing the squared error (step 5, Algorithm.~\ref{Training algorithm}), i.e., the target for
critic $y_i$ is evaluated by applying the Bellman equation  (step
4, Algorithm.~\ref{Training algorithm}). The Q-function uses the Bellman equation and takes  action ($\bmath{a}(t)$) and state ($\mathbb S(t)$)  as inputs. For continuous control, the parameters of the network representation and actor networks are adapted together with the policy gradients using the chain rule~\cite[Eqn. (6)]{lillicrap2015continuous} by using $k$ samples (steps 6-10, Algorithm.~\ref{Training algorithm}). Our DQL framework consisting of the representation, critic and actor networks has key ingredients for training and learning MPTCP packet scheduling and congestion control. 

\begin{remark}
The complexity analysis of DQL algorithm is another evolving research direction. To the best of our knowledge, despite the popularity of DQL with policy gradients, we are still investigating to accurately predict the computational complexity to train/learn a DQL network and solve a given problem.  With the relevant insights from this work and~\cite{kumar2019sample, degris2012off}, an important future research direction would be a thorough complexity analysis of DQL-MPTCP.
\end{remark}
\begin{algorithm}[t]
\caption{DQL-MPTCP Coupled Algorithm}
\label{MPTCPalgorithm}
\begin{algorithmic}[1]
\Procedure{ Begin} {}\\
Representation Network $\mathbb N(.)$, Actor $\mathbb E(.) $ and Critic $\bmath Q(.)$\\
Target Networks $\mathbb{\hat{N}}(.)$ $\mathbb{\hat{E}}(.) $ $\bmath{\hat{Q}}(.)$\\
Ornstein-Uhlenbeck process $\mathbb a$ for exploration
\EndProcedure
\Procedure{Scheduling $\&$ Congestion Control} {}\\
\textbf{While} {(Schedule Change || Window Adapt)} \textbf{do}\\
\hspace{5 mm}Compute final state $\mathbb F(.)$ and schedule by $\mathbb N(.)$\\
\hspace{5 mm}Update target $\bmath{\hat{a}}(t)$ using Actor $\mathbb E(\bmath f^N(t))$\\
\hspace{5 mm}Create action $\bmath{{a}}(t)$ applying  $\bmath{\hat{a}}(t)$ and process $\mathbb a$\\
\hspace{5 mm}Take action $\bmath{{a}}(t)$, observe $\bmath R(.)$ and $\mathbb S(t+1)$\\
\hspace{5 mm}Store transitions ($\bmath{{a}}(t), \mathbb S(t), \bmath R(t), \mathbb S(t+1) $)

\Procedure{ on-line Training} {}\\
 \hspace{8 mm}Call { \sc Training}
\EndProcedure
\Procedure{Update Target Network} {}\\
\hspace{8 mm}$\mathbb{\hat{N}}(.):=.001\mathbb{{N}}(.)+.999\mathbb{\hat{N}}(.)$\\ \hspace{8 mm}$\mathbb{\hat{E}}(.):=.001\mathbb{{E}}(.)+.999\mathbb{\hat{E}}(.)$\\ \hspace{8 mm}$\bmath{\hat{Q}}(.):=.001\mathbb{{C}}(.)+.999\bmath{\hat{Q}}(.)$
\EndProcedure\\
\textbf{endWhile}
\EndProcedure
\end{algorithmic}
\end{algorithm}

\section{Stability Analysis} 
\label{sec:stab}
Peng \emph{et al.}~\cite{peng2016multipath} conducted a detailed study of the various variants of the MPTCP using a generalised fluid model and analyzed their stabilization aspects. In fact, analyzing stability has been an essential first step in designing a congestion control algorithm to guarantee it has a desirable equilibrium and convergence properties. Our DQL-MPTCP algorithm is first trained offline (using Algorithm.~\ref{Training algorithm}) and then deployed, and goes along to learn online (step 14 in Algorithm.~\ref{MPTCPalgorithm} calls Algorithm.~\ref{Training algorithm}). Therefore, we examine whether the learning process of DQL-MPTCP will indeed drive the network towards an equilibrium starting from an arbitrary initial state. Even though in reality a network is seldom in equilibrium, a stable online learning process ensures that it is always pursuing a desirable state, which also makes it easier to understand the global protocol behavior of the overall network. 

{ For improving smoothness and responsiveness in the implementation, % we have not set any stability oriented conditions on policy gradient operation, rather
we indirectly guarantee the stability by imposing few constraints in the underlying Q-learning process of our MPTCP algorithm.} Given the convergence of value iteration (the Q-learning process), the policy iteration is guaranteed to converge. Further, at the convergence point, the current policy and its value function are the optimal policy and the optimal value function. Recall (A2), the optimal policy $\Pi^{\star}$ is a greedy policy corresponding to $Q^{\star}$. In fact, any unconstrained policy iteration process leverages wider state space and region.  Besides, the convergence of the policy iteration has been proven to be not slower than that of the value iteration, therefore, policy iteration has the potential to improve both smoothness and responsiveness.%\footnote{In practice, to ensure the stability of policy gradient learning, it is recommended for generalized policy iterations to augment the policy improvement step.  Using a trust-region constraint (computable using Kullback-Leibler divergence), one may drive stable policy iteration, however,  the above discussion requires further investigations and is not in the scope of this work.}
Based on above discussion, our focus in this section is mainly to quantify the convergence conditions of value iteration for the Q-learning process of our DQL-MPTCP.} %See our initial stability analysis ideas for DRL-based MPTCP in~\cite{pokhrel2021stability}.

Divergence is an important challenge in Deep Q Learning; it is not well understood and often arises in implementations. Therefore, it is important to analyze the stability of our proposed DQL-MPTCP algorithm~\cite{haarnoja2018learning, achiam2019towards}.
%One can observe in Algorithm~1 that our DQL-MPTCP control policies performs learning in the reinforcement learning setting.

With relevant insights from~\cite{achiam2019towards}, the DQL ideas implemented in Algorithm~1 is to learn an approximate to the optimal value function $Q^{\star}$, which satisfies (A1). With $\mathcal T^{\star} Q(s,a)$ given by the right hand side of (A1), after $\mathcal T^{\star}: \mathcal Q\rightarrow \mathcal Q$ operation on Q functions; then (A1) is given by
\begin{equation}\tag{A3}
    Q^{\star}=\mathcal T^{\star} Q^{\star},
\end{equation}
where $\mathcal T^{\star}$ is the optimal Bellman operator, which  with modulus $\gamma$ is a contraction in the supremum norm. As a result, we have two cases.

{\sc Case I}. Given that the Q-function is represented by a finite table with (completely) known $\bmath R_i$ (reward function) and transition kernel (steps 12 and 14 Algorithm~1), $\mathcal T^{\star}$ can be determined and therefore, $Q^{\star}$ can be estimated by using the successful method for computing an optimal Markov Decision Process policy (and its value), known as \emph{value iteration}. For such cases, the value iteration starts at the end and then works backward, refining the estimate of $Q^{\star}$. %There is really no end, so it uses an arbitrary end point.
Consider  $Q^{i}$ be the Q-function assuming there are $i$ stages to go, then, these can be defined recursively; and the iteration starts with an arbitrary function $Q_{0}$ and uses the following equation to obtain the function for $i+1$ stages (to move from the function for $i$ stages to go): 
\begin{equation}\tag{A4}
    Q_{i+1}=\mathcal T^{\star} Q_{i}.
\end{equation}
The convergence of the value iteration given by (A4) using $Q_{0}$ as an initial point is undertaken by the \emph{Banach Fixed Point Theorem}.

{\sc Case II}. Given that $\bmath R_i$ and the transition kernel are partially known (not completely known), it is viable to use Q-learning and learn $Q^{\star}$ in such settings~\cite{watkins1992q, achiam2019towards}. In fact, Watkins \emph{et al.}~\cite{watkins1992q} demonstrated that  Q-learning  converges  to  the  optimum  $\bmath a^{\star}$ (action-values)  with  probability $1$  so  long  as  all  actions  are  repeatedly  sampled  in  all  stages  and  the  action-values  are  represented  discretely (steps 10, Algorithm.~\ref{MPTCPalgorithm} and 4-10, Algorithm.~\ref{Training algorithm}). The Q-values of the ($s,a$) pairs are updated using reward and next state for estimating $\mathcal T^{\star} Q_{i}(s,a)$ given by \[
\mathcal T^{\star} Q_{i}(s,a)= \bmath R_i +\gamma \max_{\hat a} Q_{i}(\hat s,\hat a)\] as:
\begin{equation}\tag{A5}
Q_{i+1}(s,a)=Q_{i}(s,a)+\alpha_i \big(\mathcal T^{\star} Q_{i}(s,a)-Q_{i}(s,a)\big).
\end{equation}
(A5) under much relaxed condition, viz., \emph{ the learning rates $\alpha_i\in[0,1\}$ must approaches zero and all state action pairs must be visited often enough}, converges to $Q^{\star}$. One can observe that the Q-learning given by (A5) works on time-varying difference as the updates in each stages are based on the temporal difference ($\mathcal T^{\star} Q(s_t,a_t)-Q(s_t,a_t)$) given by:
\begin{equation}\tag{A6}
\Delta_t=\bmath R_t +\gamma \max_{\hat a} Q(\hat s_t,\hat a_t)-Q(s_t,a_t).\;\;
\end{equation}

Motivated by the analysis in~\cite{achiam2019towards}, our DQL-MPTCP is based on the generalization of (A5) to the approximation setting:
\begin{equation}\tag{A7}
g'=g+\alpha \big(\mathcal T^{\star} Q_{g}(s,a)-Q_{g}(s,a)\big)\nabla_gQ_{g}(s,a)\big],
\end{equation}
where $Q_{g}$ is a continuous function with parameters $g$. Observe that (A7) becomes (A5) given $Q_{g}$ is a table. On the whole, such DQL representation uses gradient descent and experience replay, maintaining the expected update as:
\begin{equation}\tag{A8}
g'=g+\alpha \underset {s,a\sim \rho} {\mathbb E}\big[\big(\mathcal T^{\star} Q_{g}(s,a)-Q_{g}(s,a)\big)\nabla_gQ_{g}(s,a)\big],
\end{equation}
where $\rho$, at the time of updates, is the distribution of the prioritized experience in the replay.
Typically, to ensure stability, it is acceptable to replace, $\mathcal T^{\star} Q_{g}$ with the one based on slowly-updating target network, $\mathcal T^{\star} Q_{h}$, where $h$ is obtained by Polyak averaging~\cite{lillicrap2015continuous,haarnoja2018learning, watkins1992q, achiam2019towards}.\footnote{For simplicity and tractability~\cite{lillicrap2015continuous}, we omit the target network in our stability analysis.}

Using Taylor expansion of $Q$ around $g$ for the pair ($\hat s,\hat a$) we can observe the new Q-values based on (A8) as:
\begin{equation}\tag{A9}
Q_g'(\hat s,\hat a)=Q_g(\hat s,\hat a)+\nabla_gQ_{g}(\hat s,\hat a)^T(g'-g)+\BigO{\norm {g'-g}^2},
\end{equation}
Over a finite state action space, we consider matrix vector form in $\mathbb R^{|\mathbb S||\mathbb A|}$, therefore, combining (A8) and (A9) provides:
\begin{equation}\tag{A10}
Q_g'=Q_g+\alpha \mathcal K_gD_\rho\big(\mathcal T^{\star} Q_{g}(s,a)-Q_{g}(s,a)\big)+\BigO{\norm {g'-g}^2},
\end{equation}
where $D_\rho$ is a diagonal matrix obtained from the distribution from replay, $\rho(s,a)$ and $\mathcal K_g$ is $|\mathbb S||\mathbb A|\times|\mathbb S||\mathbb A|$ a matrix given by\[\mathcal K_g (\hat s, \hat a, s,a)=\nabla_gQ_{g}(\hat s,\hat a)^T\nabla_gQ_{g}(s,a).\]

Finally, our formulation (A10) provides us important insights to analyse the stability of DQL-MPTCP. In particular, the only  condition essential now for convergence of DQL-MPTCP is to guarantee that the update operator $\mathcal U: \mathcal Q\rightarrow \mathcal Q$ with
\begin{equation}\tag{A11}
\mathcal U Q_g=Q_g+\alpha \mathcal K_gD_\rho\big(\mathcal T^{\star} Q_{g}(s,a)-Q_{g}(s,a)\big)
\end{equation}
is a contraction on $\mathcal Q$.% Our preliminary stability analysis framework for DRL-based MPTCP can be found in~\cite{pokhrel2021stability}.
With relevant insights from~\cite{haarnoja2018learning, watkins1992q, achiam2019towards}, this will be considered next. 

\subsection{Practical Design Insights}% for Stable DQL-MPTCP Algorithm Design}
\label{sec:stab1}
In this subsection, we investigate the potential conditions and study how the update $\mathcal U: \mathcal Q\rightarrow \mathcal Q$ given by (A11) may give rise to instability in DQL-MPTCP and how to repair such instabilities. Next,  motivated by the analysis in~\cite{achiam2019towards}, we decompose our analysis into following cases.

{\sc Case I}. We consider $\mathcal U^i$, a special case of $\mathcal U$ where $\mathcal K_g=1$ and $D_\rho=1$, therefore, (A11) becomes
\begin{equation}\tag{A11.1}
\mathcal U^i Q_g=Q_g+\alpha \big(\mathcal T^{\star} Q_{g}-Q_{g}\big).
\end{equation}
\begin{theorem}
$\mathcal U^i$ defined by (A11.1) is a contraction on $\mathcal Q$ and $Q^{\star}$ is the fixed-point.
\label{thm:1}
\end{theorem}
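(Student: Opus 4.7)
The plan is to verify the contraction property of $\mathcal U^i$ directly in the supremum norm, exploiting the fact that $\mathcal T^{\star}$ is already known to be a $\gamma$-contraction in $\|\cdot\|_\infty$. Observe from (A11.1) that $\mathcal U^i$ is simply a convex combination (indexed by the step size $\alpha$) of the identity map and the Bellman operator: $\mathcal U^i Q_g = (1-\alpha) Q_g + \alpha \mathcal T^{\star} Q_g$. This algebraic rewrite is the only real trick — once it is in hand, everything else reduces to the triangle inequality.

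First I would take any two Q-functions $Q_1, Q_2 \in \mathcal Q$ and compute the difference
\begin{equation*}
\mathcal U^i Q_1 - \mathcal U^i Q_2 = (1-\alpha)(Q_1 - Q_2) + \alpha\bigl(\mathcal T^{\star} Q_1 - \mathcal T^{\star} Q_2\bigr).
\end{equation*}
Applying the supremum norm, the triangle inequality, and the fact that $\mathcal T^{\star}$ is a $\gamma$-contraction yields
\begin{equation*}
\|\mathcal U^i Q_1 - \mathcal U^i Q_2\|_\infty \;\leq\; \bigl(1-\alpha + \alpha\gamma\bigr)\|Q_1 - Q_2\|_\infty \;=\; \bigl(1 - \alpha(1-\gamma)\bigr)\|Q_1 - Q_2\|_\infty.
\end{equation*}
For any step size $\alpha \in (0,1]$ and discount $\gamma \in [0,1)$ (which is the operating regime assumed throughout the paper), the modulus $\beta := 1 - \alpha(1-\gamma)$ lies strictly in $[0,1)$, establishing that $\mathcal U^i$ is a contraction on $\mathcal Q$.

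Next I would identify the fixed point. Substituting $Q^{\star}$ into (A11.1) and using the Bellman fixed-point identity (A3), i.e.\ $\mathcal T^{\star} Q^{\star} = Q^{\star}$, gives $\mathcal U^i Q^{\star} = Q^{\star} + \alpha(Q^{\star} - Q^{\star}) = Q^{\star}$, so $Q^{\star}$ is indeed a fixed point. Uniqueness then follows immediately from the Banach Fixed Point Theorem applied to the complete metric space $(\mathcal Q, \|\cdot\|_\infty)$, completing the claim.

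I do not anticipate a genuine obstacle here: the statement is essentially the observation that a damped Bellman update inherits contractivity from the Bellman operator itself. The only mildly delicate point is to spell out the admissible range of $\alpha$ and $\gamma$ that keeps $\beta < 1$, and to note that the analysis uses the supremum norm (which is the one inherited from the Bellman contraction); these conditions tie directly back to the ``smallish control parameter'' design insight already flagged in Section~\ref{sec:stab} and will be re-used when the more general operator $\mathcal U$ (with nontrivial $\mathcal K_g$ and $D_\rho$) is tackled in subsequent cases.
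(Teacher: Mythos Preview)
Your proposal is correct and follows essentially the same argument as the paper: rewrite $\mathcal U^i$ as the convex combination $(1-\alpha)\,\mathrm{Id} + \alpha\,\mathcal T^{\star}$, apply the triangle inequality together with the $\gamma$-contraction of $\mathcal T^{\star}$ to obtain the modulus $1-\alpha(1-\gamma)<1$, and then read off the fixed point from (A3). Your write-up is slightly more explicit about the admissible ranges of $\alpha,\gamma$ and about invoking Banach for uniqueness, but the underlying steps are identical to the paper's proof.
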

\begin{proof}
Observe that $\mathcal U^i$ given by (A11.1) satisfies
\begin{eqnarray}
&&\norm{\mathcal U^i Q_1-\mathcal U^i Q_2}_{\infty}\nonumber\\&=& \norm{\alpha(\mathcal T^{\star}Q_1-\mathcal T^{\star}Q_2)+(1-\alpha)(Q_1-Q_2)}_{\infty},\nonumber\\
&\leq&\alpha\norm{\mathcal T^{\star}Q_1-\mathcal T^{\star}Q_2}_{\infty}+(1-\alpha)\norm{(Q_1-Q_2)}_{\infty},\nonumber\;\;\;\;\;\\
&\leq&\alpha\gamma\norm{Q_1-Q_2}_{\infty}+(1-\alpha)\norm{(Q_1-Q_2)}_{\infty},\nonumber\\
&=&\norm{Q_1-Q_2}_{\infty}(1-\alpha(1-\gamma)),\nonumber
\end{eqnarray}
and since $1>(1 - \alpha(1-\gamma))$, the update $\mathcal U^i$ contracts and $Q^{\star}$ is its fixed-point (follows straightforward using (A3)).
\end{proof}
{\sc Case 1} and Theorem~\ref{thm:1} provide us the following important design insight for stable MPTCP.

{\sc Design Insight 1}. Given $\mathcal U$ of the proposed DQL-MPTCP gets increasingly closer to $\mathcal U^i$, one can anticipate progressively more stable DQL-MPTCP behavior.

{\sc Case II}. We consider $\mathcal U^{ii}$, a special case of $\mathcal U$ where $\mathcal K_g=1$, therefore, (A11) becomes
\begin{equation}\tag{A11.2}
\mathcal U^{ii} Q_g=Q_g+\alpha D_\rho \big(\mathcal T^{\star} Q_{g}-Q_{g}\big).
\end{equation}

\begin{theorem}
$\mathcal U^{ii}$ defined by (A11.2) is a contraction on $\mathcal Q$ and $Q^{\star}$ is the fixed-point, if $\rho(s,a)>0, \forall (s,a)$ and $\alpha\in(0,1/\hat\rho)$ where $\hat\rho=\max_{s,a}\rho(s,a)$.
\label{thm:2}
\end{theorem}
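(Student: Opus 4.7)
My plan is to mirror the structure of the proof of Theorem~\ref{thm:1}, but to track the effect of the diagonal reweighting $D_\rho$ entry by entry before taking the sup-norm. The operator $\mathcal U^{ii}$ is an entrywise convex-like mixture: at every pair $(s,a)$ we have
\begin{equation*}
[\mathcal U^{ii} Q](s,a) = (1-\alpha\rho(s,a))\,Q(s,a) + \alpha\rho(s,a)\,[\mathcal T^{\star} Q](s,a).
\end{equation*}
The hypothesis $\alpha\in(0,1/\hat\rho)$ is precisely what forces $1-\alpha\rho(s,a)\in(0,1]$ for every $(s,a)$, so both coefficients on the right-hand side are non-negative and sum to one. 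This will let me apply the triangle inequality without sign issues.

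The first step is then to subtract $\mathcal U^{ii} Q_1$ from $\mathcal U^{ii} Q_2$ at a fixed coordinate $(s,a)$, apply the triangle inequality, and upper-bound each term using $\|\cdot\|_\infty$ together with the supremum-norm contraction of $\mathcal T^{\star}$ with modulus $\gamma$ (this is the only property of $\mathcal T^{\star}$ we need, and it is already invoked in the proof of Theorem~\ref{thm:1}). This will give
\begin{equation*}
\bigl|[\mathcal U^{ii} Q_1 - \mathcal U^{ii} Q_2](s,a)\bigr| \;\le\; \bigl(1 - \alpha\rho(s,a)(1-\gamma)\bigr)\,\|Q_1-Q_2\|_\infty.
\end{equation*}
The second step is to take the supremum over $(s,a)$ on the left. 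Here the assumption $\rho(s,a)>0$ everywhere becomes essential: setting $\check\rho:=\min_{s,a}\rho(s,a)>0$, the worst-case coefficient on the right is $1-\alpha\check\rho(1-\gamma)$, which is strictly less than $1$ because $\alpha, \check\rho, 1-\gamma$ are all strictly positive. This yields the contraction constant $\kappa := 1-\alpha\check\rho(1-\gamma)\in(0,1)$.

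For the fixed-point claim, I would simply substitute $Q^{\star}$ into (A11.2) and use the Bellman equation (A3), $\mathcal T^{\star} Q^{\star} = Q^{\star}$: the bracketed term vanishes and $\mathcal U^{ii} Q^{\star} = Q^{\star}$. Banach's theorem then guarantees uniqueness. The main obstacle — and the reason Theorem~\ref{thm:2} needs the extra hypotheses that Theorem~\ref{thm:1} did not — is that without $\rho(s,a)>0$ for \emph{every} $(s,a)$, some coordinates would never be updated and the coefficient $1-\alpha\rho(s,a)(1-\gamma)$ would equal $1$ on those coordinates; the contraction factor in the sup-norm would then be exactly $1$, giving only non-expansion and losing uniqueness of the fixed point. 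Similarly, the upper bound $\alpha<1/\hat\rho$ is needed so the weights $1-\alpha\rho(s,a)$ remain non-negative; otherwise one could not split the sup-norm across the convex combination and the clean triangle-inequality argument above breaks down. These two pieces of bookkeeping are the only subtleties; once they are in place, the contraction estimate falls out as in Theorem~\ref{thm:1}.
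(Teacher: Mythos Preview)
Your proposal is correct and follows essentially the same route as the paper: an entrywise decomposition into the convex combination $(1-\alpha\rho(s,a))Q+\alpha\rho(s,a)\mathcal T^{\star}Q$, then the triangle inequality together with the $\gamma$-contraction of $\mathcal T^{\star}$, then a supremum over $(s,a)$ with the contraction constant $1-\alpha\tilde\rho(1-\gamma)$ where $\tilde\rho=\min_{s,a}\rho(s,a)$ (your $\check\rho$). Your explicit remarks on why $\alpha<1/\hat\rho$ is needed for non-negativity of the weights and why $\rho(s,a)>0$ everywhere is needed to make the contraction strict are exactly the points the paper uses, though it states them more tersely.
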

\begin{proof}
One can observe that for any ($s,a$),
\begin{eqnarray}
&&[\mathcal U^{ii} Q_1-\mathcal U^{ii} Q_2](s,a)\nonumber\\&=& \alpha\rho(s,a)\big([\mathcal T^{\star}Q_1-\mathcal T^{\star}Q_2](s,a)\big)\nonumber\\&&+(1-\alpha\rho(s,a))\big(Q_1(s,a)-Q_2(s,a)\big),\nonumber\\
&\leq&\alpha\gamma\rho(s,a)\norm{Q_1-Q_2}_{\infty}+(1-\alpha\rho(s,a))\norm{(Q_1-Q_2)}_{\infty}\nonumber\;\;\;\;\;\\
&=&\norm{Q_1-Q_2}_{\infty}(1-\alpha\rho(s,a)(1-\gamma)).\nonumber
\end{eqnarray}
By taking the $\max_{s,a}$ on both sides,
\begin{eqnarray}
&&\norm{\mathcal U^{ii} Q_1-\mathcal U^{ii} Q_2}_{\infty}\nonumber\\
&\leq&\max_{s,a}\norm{Q_1-Q_2}_{\infty}(1-\alpha\rho(s,a)(1-\gamma)).\nonumber\\
&=&\norm{Q_1-Q_2}_{\infty}(1-\alpha\tilde\rho(1-\gamma)),\nonumber
\end{eqnarray}
where $\tilde\rho=\min_{s,a}\rho(s,a)$. Note that the condition, $\rho(s,a)>0, \forall (s,a)$, is equivalent to $\tilde\rho>0$. Therefore, $1>(1 - \alpha\tilde\rho(1-\gamma))$, which mandates that the update $\mathcal U^{ii}$ contracts and $Q^{\star}$ is its fixed-point (follows straightforward using (A3)). However, observe that when $\tilde\rho=0$, we simply have an upper bound on $\norm{\mathcal U^{ii} Q_1-\mathcal U^{ii} Q_2}_{\infty}$. See~\cite{achiam2019towards} for more background.
\end{proof}

Considering $\mathcal U^{ii}$, we observe that the missing data in the distribution has adverse impact on the convergence of the learning process. Given that the (exploration) policy explores all state action pairs enough, $\mathcal U^{ii}$ behaves as expected, however, missing data may cause a problem. This observation provides us another important design insight.

{\sc Design Insight 2}. DQL-MPTCP may struggle to converge, when data is scarce at the beginning of the training, where initial conditions matter a lot.

The importance of\;{\sc Design Insight 2}  will be discussed in Section~\ref{sec:inst}.%\footnote{ In a different context, similar finding ({\sc Design Insight 2}) has been reported in~\cite{van2018deep}.} 

{\sc Case 3}. We consider $\mathcal U^{iii}$, a special case of $\mathcal U$ where $\mathcal K_g=\mathcal K$ is positive-definite constant symmetric matrix, therefore, (A11) becomes
\begin{equation}\tag{A11.3}
\mathcal U^{iii} Q_g=Q_g+\alpha \mathcal K D_\rho \big(\mathcal T^{\star} Q_{g}-Q_{g}\big).
\end{equation}
Therefore, $\mathcal U^{iii}$ is the case of linear function approximation, $\mathcal K$ is constant with respect to $g$ and the Q-values before and after updates are given by
\begin{equation}
\mathcal Q_{g'}=\mathcal U^{iii} Q_g; \;\mbox{where}\; \nonumber
\end{equation}
\[\mathcal K (\hat s, \hat a, s,a)=\phi(\hat s,\hat a)^T\phi(s,a).\]
\begin{theorem}
$\mathcal U^{iii}$ defined by (A11.3) is a contraction on $\mathcal Q$ and $Q^{\star}$ is the fixed-point, if and only if i) $\alpha\rho_x\mathcal K_{xx}>1$, $\forall x$ and \[ii) \;(1-\gamma)\rho_x\mathcal K_{xx}\geq(1+\gamma)\sum_{x\neq y}\rho_y\abs{\mathcal K_{xy}}, \forall x,\]
where $x,y$ are the indices of state action pairs.
\label{thm:3}
\end{theorem}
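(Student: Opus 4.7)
The plan is to mirror the coordinate-wise strategy used in the proofs of Theorems~\ref{thm:1} and~\ref{thm:2}, but with additional bookkeeping to track the off-diagonal entries $\mathcal K_{xy}$ and their interaction with the replay weighting $D_\rho$. First, I would write the increment componentwise: fixing the index $x$ over state-action pairs,
\[
[\mathcal U^{iii}Q_1 - \mathcal U^{iii}Q_2]_x = \delta_x + \alpha \sum_y \mathcal K_{xy}\rho_y(\Delta_y - \delta_y),
\]
where $\delta_y := (Q_1-Q_2)_y$ and $\Delta_y := (\mathcal T^{\star} Q_1 - \mathcal T^{\star} Q_2)_y$, together with the known fact $|\Delta_y|\le \gamma\|\delta\|_{\infty}$ since $\mathcal T^{\star}$ is a $\gamma$-contraction in the supremum norm. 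I would then split the inner sum into the $y=x$ contribution, regrouped to produce the coefficient $(1-\alpha \mathcal K_{xx}\rho_x)$ multiplying $\delta_x$, and the $y\neq x$ contribution, treated via triangle inequality.

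The next step is to apply the triangle inequality and bound $|\delta_y|,|\Delta_y|$ by $\|\delta\|_{\infty}$ and $\gamma\|\delta\|_{\infty}$ respectively. This yields an upper bound on $|[\mathcal U^{iii}Q_1-\mathcal U^{iii}Q_2]_x|$ that splits naturally into (a) a ``diagonal'' term proportional to $(1-\gamma)\rho_x\mathcal K_{xx}$, which drives contraction, and (b) an ``off-diagonal'' term proportional to $(1+\gamma)\sum_{y\neq x}\rho_y|\mathcal K_{xy}|$, which opposes it. Condition (ii) is precisely what ensures the diagonal term dominates the off-diagonal one coordinate-by-coordinate, producing a per-coordinate multiplier of the form $1-\alpha\varepsilon(1-\gamma)$ with $\varepsilon>0$. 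Condition (i) controls the sign of $(1-\alpha \mathcal K_{xx}\rho_x)$ arising from the regrouping step, so that the triangle inequality does not double-count the diagonal. Taking $\max_{x}$ on both sides then gives $\|\mathcal U^{iii}Q_1 - \mathcal U^{iii}Q_2\|_{\infty}\le c\|Q_1-Q_2\|_{\infty}$ for some $c<1$, establishing the contraction. The fixed-point claim $\mathcal U^{iii}Q^{\star}=Q^{\star}$ is immediate from (A3): plugging $Q^{\star}$ into (A11.3) makes the bracket $\mathcal T^{\star}Q^{\star}-Q^{\star}$ vanish, so by uniqueness of contraction fixed-points (Banach), $Q^{\star}$ is the unique one.

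For the converse (``only if''), I would construct sharp witnesses. If (ii) fails at some index $x_{0}$, I would pick $Q_{1}, Q_{2}$ that differ on the coordinates $y\neq x_0$ in the worst-case directions dictated by the signs of $\mathcal K_{x_0 y}$, so that the triangle inequality bound is saturated and the resulting coordinate inflates $\|\delta\|_{\infty}$ by a factor $\ge 1$. If (i) fails, I would exhibit a $\delta$ concentrated on $x_0$ for which the sign flip in $(1-\alpha\mathcal K_{x_0x_0}\rho_{x_0})$ adds rather than subtracts mass, again violating strict contraction. Combining both witnesses shows that both conditions are necessary.

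The main obstacle I anticipate is the off-diagonal term $\sum_{y\neq x}\rho_y|\mathcal K_{xy}|$: positive-definiteness of $\mathcal K$ does not prevent negative off-diagonal entries, so the absolute values genuinely matter and the algebra of the triangle-inequality step must be done carefully together with condition (i) so that the sign of $(1-\alpha\mathcal K_{xx}\rho_x)$ is tracked consistently across the expansion. The ``only if'' direction inherits this same delicacy, because the counterexample must simultaneously exploit the worst-case off-diagonal signs and the Bellman discount $\gamma$; absent care here, the witness fails to saturate the bound and necessity is not established.
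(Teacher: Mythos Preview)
Your proposal is correct and mirrors the paper's own argument: componentwise expansion of $[\mathcal U^{iii}Q_1-\mathcal U^{iii}Q_2]_x$, separation of the diagonal $y=x$ term from the off-diagonal sum, use of condition~(i) to resolve the sign of $|1-\alpha\rho_x\mathcal K_{xx}|$, and then condition~(ii) to force the resulting row-sum Lipschitz bound $G(\mathcal K)=\max_x\bigl(1+\alpha(1+\gamma)\sum_{y\neq x}\rho_y|\mathcal K_{xy}|-\alpha(1-\gamma)\rho_x\mathcal K_{xx}\bigr)$ below~$1$. Your explicit witness construction for the converse is in fact more thorough than the paper, which simply observes $G(\mathcal K)<1\Leftrightarrow$ condition~(ii) under condition~(i) without separately arguing that the bound $G(\mathcal K)$ is attained.
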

\begin{proof} By using index notation
\begin{eqnarray}
&&[\mathcal U^{iii} Q_1-\mathcal U^{iii} Q_2]_x\nonumber\\&=& \alpha\sum_{y}\rho_y\mathcal K_{xy}\big([(\mathcal T^{\star}Q_1-Q_1)-(\mathcal T^{\star}Q_2\nonumber\\&&-Q_2)]_y\big)+[Q_1-Q_2]_x,\nonumber\\
&=&\alpha\sum_{y}\rho_y\mathcal K_{xy}[\mathcal T^{\star}Q_1-\mathcal T^{\star}Q_2]_y\nonumber\\&&+\sum_{y}(\Delta{xy}-\alpha\rho_y\mathcal K_{xy})[Q_1-Q_2]_y.\nonumber\\
&\leq&\sum_{y}\big(\alpha\gamma\rho_y\abs{\mathcal K_{xy}}+\abs{\Delta{xy}-\alpha\rho_y\mathcal K_{xy}}\big)\norm{Q_1-Q_2}_\infty.\nonumber
\end{eqnarray}
Let
\begin{eqnarray}
G(\mathcal K)=\max_x\sum_{y}\big(\alpha\gamma\rho_y\abs{\mathcal K_{xy}}+\abs{\Delta{xy}-\alpha\rho_y\mathcal K_{xy}}\big).\nonumber
\end{eqnarray}
Then, assuming $\alpha\rho_x\mathcal K_{xx}>1$, condition i),
\begin{eqnarray}
G(\mathcal K)&=&\max_x\Big(\alpha(1+\gamma)\sum_{x\neq y}\rho_y\abs{\mathcal K_{xy}}+\abs{1-\alpha\rho_y\mathcal K_{xy}}\nonumber\\&&+\gamma\alpha\rho_y\mathcal K_{xy}\Big).\nonumber\\
&=&\max_x\Big(1+(1+\gamma)\sum_{x\neq y}\rho_y\abs{\mathcal K_{xy}}-(1-\gamma)\rho_x\mathcal K_{xx}\Big).\nonumber
\end{eqnarray}
Observe that $G(\mathcal K)<1$ if an only if
\[\forall x,\;(1-\gamma)\rho_x\mathcal K_{xx}\geq(1+\gamma)\sum_{x\neq y}\rho_y\abs{\mathcal K_{xy}}. \] See~\cite{achiam2019towards} for other details.
\end{proof}
Considering $\mathcal U^{iii}$, we observe that the conditions in  \textbf{Theorem~\ref{thm:1}} are quite confining, it not only requires $\rho>0$ everywhere but also for  the choices of $\gamma$ (eg. $\gamma=0.999$) the (diagonal terms of $\mathcal K$)$\gg$(off-diagonal terms of $\mathcal K$) . This observation provides us another important design insight.

{\sc Design Insight 3}. The stability of DQL-MPTCP depends on the properties of the Q-approximator; in the sense that approximators which are more aggressive (larger off-diagonal terms in $\mathcal K$) may not demonstrate stable learning.

Generally, in our DQL-MPTCP learning, both the $\mathcal K$ and the $\rho$ change between update stages, thus, each stage can be perceived as applying different updates. On this end, in general, if we sequentially apply different contraction maps with the same fixed point, then we will atain that fixed point. As a result, motivated by the findings in~\cite{achiam2019towards}, we have the following theorem.

\begin{theorem}
Assume a sequence of updates \{$\mathcal U^{0}, \mathcal U^{1}, \mathcal U^{2}$, \dots\} with each $\mathcal U^{i}: \mathcal Q\rightarrow \mathcal Q$ being Lipschitz continuous with constant $\delta_i$ and the existence and uniqueness of the fixed point $Q^{\star}=\mathcal T^{\star} Q^{\star}$ is guaranteed, where all $\mathcal U^{i}$ share a common fixed point $Q^{\star}$. Then, starting from any initial point ${Q^{0}}$, the trajectory \{{$Q^{0}, {Q^{1}}, {Q^{2}}, \dots$}\} generated by the DQL-MPTCP algorithm converges to a unique and globally stable fixed point $Q^{\star}$, if and only if the stages produced by $Q_{i+1}=\mathcal U^{i} Q_{i}$ satisfies ({for an iterate $k$}):%~\cite{pokhrel2021stability,achiam2019towards}:
\begin{equation}\tag{A11.4}
    \norm{Q^{\star}-Q^0}\prod_{j=0}^{i-1}\delta_j\geq\norm{Q^{\star}-Q^i}\;\mbox{such that}\; \forall j\geq k, \delta_j\in[0,1).
\end{equation}
\end{theorem}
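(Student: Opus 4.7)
The plan is to reduce the convergence question to a one-dimensional one about the product of Lipschitz constants, by exploiting the fact that $Q^{\star}$ is a common fixed point of every update $\mathcal{U}^{i}$. First, I would observe the pointwise contraction-style bound
\begin{equation*}
\|Q^{i+1} - Q^{\star}\| \;=\; \|\mathcal{U}^{i} Q^{i} - \mathcal{U}^{i} Q^{\star}\| \;\leq\; \delta_{i}\,\|Q^{i} - Q^{\star}\|,
\end{equation*}
which holds for every $i \geq 0$ by the Lipschitz hypothesis on $\mathcal{U}^i$ together with $\mathcal{U}^{i} Q^{\star} = Q^{\star}$. Iterating this from $i=0$ immediately gives $\|Q^{i} - Q^{\star}\| \leq \|Q^{0} - Q^{\star}\| \prod_{j=0}^{i-1} \delta_{j}$, which is precisely the inequality in (A11.4). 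Thus (A11.4) is inherited automatically from the Lipschitz/common-fixed-point structure, and the whole question collapses to whether the scalar product $\prod_{j=0}^{i-1}\delta_{j}$ vanishes as $i\to\infty$.

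For the sufficiency (``if'') direction, I would use the condition that there exists $k$ with $\delta_{j}\in[0,1)$ for all $j\geq k$, interpreted via $\bar\delta := \sup_{j\geq k}\delta_{j}<1$ (the natural uniform reading, which I would make explicit). Splitting the product as $\prod_{j=0}^{i-1}\delta_{j} = \bigl(\prod_{j=0}^{k-1}\delta_{j}\bigr)\cdot\bigl(\prod_{j=k}^{i-1}\delta_{j}\bigr)$, the first factor is a fixed constant determined by the initial transient, while the second is bounded above by $\bar\delta^{\,i-k}\to 0$. Combining this with the iterated Lipschitz bound forces $\|Q^{i}-Q^{\star}\|\to 0$ for \emph{every} starting point $Q^{0}$, which yields both global stability and convergence. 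Uniqueness of the limit follows from the Banach Fixed Point Theorem applied to any single $\mathcal{U}^{j}$ with $\delta_{j}<1$, since $Q^{\star}$ is a fixed point of that contraction.

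For the necessity (``only if'') direction, assume convergence from every initial $Q^{0}$. The inequality part of (A11.4) is automatic from the Lipschitz bound above. The nontrivial content is that eventually $\delta_{j}<1$: I would argue by contradiction, supposing that infinitely many $j$ satisfy $\delta_{j}\geq 1$. Since each $\mathcal{U}^{j}$ shares the same fixed point, one can pick an initial condition $Q^{0}$ for which the subsequence of iterates corresponding to those indices fails to shrink, contradicting convergence from arbitrary $Q^{0}$. Together with step 1 this closes the equivalence.

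The main obstacle, and the one I would flag carefully in the write-up, is that the stated condition $\delta_{j}\in[0,1)$ for $j\geq k$ does \emph{not} by itself guarantee $\prod_{j}\delta_{j}\to 0$ (e.g.\ $\delta_{j}=1-1/j^{2}$ yields a positive limit product). The clean argument requires either the uniform bound $\bar\delta<1$ used above, or a summability condition such as $\sum_{j\geq k}(1-\delta_{j}) = \infty$, equivalent to $\prod_{j}\delta_{j}=0$. I would therefore first prove the result under the uniform-contraction reading of the hypothesis, and then remark on how the proof extends by replacing $\bar\delta^{\,i-k}$ with $\exp\!\bigl(-\sum_{j=k}^{i-1}(1-\delta_{j})\bigr)$ under the summability variant. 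This interpretive subtlety is the principal hurdle; the rest is a direct telescoping of the Lipschitz inequality anchored at the shared fixed point.
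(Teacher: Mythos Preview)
Your core argument---use the common fixed point to write $\|Q^{i}-Q^{\star}\|=\|\mathcal U^{i-1}Q^{i-1}-\mathcal U^{i-1}Q^{\star}\|\le\delta_{i-1}\|Q^{i-1}-Q^{\star}\|$ and telescope---is exactly the paper's proof. The paper then simply asserts $\lim_{m\to\infty}\prod_{j=k}^{m}\delta_j=0$ from $\delta_j\in[0,1)$ and stops; it does not introduce a uniform bound $\bar\delta<1$ or a summability condition, and it does not address the ``only if'' direction at all.

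Two remarks. First, the interpretive subtlety you flag is real and is glossed over in the paper's own argument: your counterexample $\delta_j=1-1/j^2$ shows the bare hypothesis is insufficient, so your explicit uniform-contraction reading (or the $\sum(1-\delta_j)=\infty$ variant) is a genuine strengthening of the paper's presentation rather than a deviation from it. Second, your necessity sketch goes beyond what the paper proves, but the contradiction step is not quite right as stated: a Lipschitz constant $\delta_j\ge 1$ is only an \emph{upper} bound on the stretch, so having infinitely many such indices does not by itself produce a non-shrinking subsequence---each $\mathcal U^j$ could still contract while admitting $\delta_j=1$ as a valid (non-sharp) Lipschitz constant. To make that direction work you would need the $\delta_j$ to be the \emph{optimal} Lipschitz constants, which is an additional assumption not present in the statement.
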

\begin{proof}
%%\vspace{10 in}
Using fixed-point assumption and iterative sequence of updates,
%\begin{tabular}{l l}\hline
%\centering Statements &  Remarks \\ \hline\\
$\norm{Q^{\star}-Q^i}$ = $\norm{\mathcal U^{i-1}Q^{\star}-\mathcal U^{i-1}Q^{i-1}}$.
From the definition (and property) of Lipschitz continuity  $\norm{Q^{\star}-Q^i} \leq \delta_{i-1}\norm{Q^{\star}-Q^{i-1}}$. Therefore, $\norm{Q^{\star}-Q^0}\prod_{j=0}^{i-1}\delta_j\geq\norm{Q^{\star}-Q^i}.$
Finally,  sequence \{{$Q^{0}, {Q^{1}},\dots$}\} for $\forall j\geq k,\; \delta_j\in[0,1)$ converges to $Q^{\star}$ as $\lim_{m\to\infty}\prod_{j=k}^m\delta_j=0.$
\end{proof}
\begin{table*}[!h]
\centering
\caption{Metrics Used For Network Emulation Experiments}
\begin{tabular}{c|l|l}
  \hline
 \textbf{Experiment ID}& \textbf{Metrics used in the emulated Network Settings and Scenario}  & \textbf{Remarks}\\ \hline\hline
  I & Moving Average Throughputs: Arithmetic mean of a set of previous& Compare DQL-MPTCP with\\
   & throughputs until $t$ seconds& LIA, OLIA, BALIA\\\hline
  II & Throughputs vs. Fluctuating Delay: Variation in throughputs with& How DQL-MPTCP reacts to\\
   &   increasing one path delays& fluctuating delays in one of the path\\\hline
  III & Throughputs vs. Varying Bandwidth: Average throughputs with& How DQL-MPTCP performs\\
   &  increasing one path capacities& with respect to varying capacities\\\hline
 IV & TCP Friendliness : Ability of a new protocol to behave under& Degradation of DQL-MPTCP\\
   &congestion like the TCP protocol& when TCP competes for bandwidth\\\hline
\end{tabular}
\label{defination2}
\end{table*}

\subsection{Stability by Design}
\label{sec:inst}
Even though DQL-MPTCP updates vary between stages, the aforementioned
insights from different settings provide useful guidance
for understanding and addressing possible  instability issues in DQL-MPTCP.
With the relevant intuitions from the aforementioned analysis,
we next discuss four different possible causes of instability and how we address them by design.

{\sc Cause 1.} Consider a scenario with aggressive learning rate ($\alpha$ is very high), then the term $\BigO{\norm {g'-g}^2}$ in (A9) is quite large, as a result DQL-MPTCP updates may not correlate well with the Bellman updates thus leading to instability.

{\sc Cause 2.} A scenario when $\alpha$ is small enough for linearization, however, quite large that the $\mathcal U$ in (A11.3) start expanding rather then contracting, hence, leading to instability (Theorem~\ref{thm:3}).

{\sc Cause 3.} Too aggressive generalization of Q function due to large off-diagonal matrix of $\mathcal K$ also may cause $\mathcal U$  to expand (Theorem~\ref{thm:3}).

{\sc Cause 4.} If  the distribution for updates are inadequate, the
Q-values for missing state action pairs which are computed
by general extrapolation, often incur errors. Such errors, which 
propagate through the Q-values to all other state-action pairs, may
lead to unstable learning (Theorem~\ref{thm:2}).%\footnote{A similar situation in a different context has also been investigated by Fujimoto \emph{et al.}~\cite{fujimoto2018addressing}.} 

Observe that the three insights from Section~\ref{sec:stab1} have been exploited for the convergence and stability in the DQL-MPTCP design (Section~\ref{sec:algo12}). Earlier in Section~\ref{sec:algo12}, we adopted smooth learning rates in relation to the Bellman updates for guaranteed contraction and linearization. We have also applied constructive generalization of the Q function and utilized learning from demonstrations to tackle the case of missing values.

\section{Performance Evaluation}

We have executed an extensive set of experiments to evaluate DQL-MPTCP under several network scenarios. 
The network settings used in our testbed are discussed next. Thereafter, we present our findings and reflections from the obtained results.

% \subsection{Testbed Setup}
 %\label{sec:testenv}
 We evaluate and compare the performance of our DQL-MPTCP algorithm with the standard MPTCP algorithms whose implementation code is accessible: LIA~\cite{raiciu2011coupled}, BALIA~\cite{peng2016multipath}, OLIA~\cite{khalili2013mptcp}. We have used MPTCP v0.93 implemented in Linux.\footnote{ \emph{https://www.multipath-tcp.org}} We explain the findings from four different experiments in the following.  Table ~\ref{defination2} describes the performance metrics under consideration for the four experiments. 
 
 Our testbed consists of five laptops, four of them are users and one acts as a Server (running MPTCP Source).    All four user laptops are connected  with  a  Switch (Gigabit) consisting of two  separate interfaces (Gigabit Ethernet) creating two different connections, always used for downloading files from the server laptop. For flexibility, quite similar to ~\cite{peng2016multipath}, every MPTCP  connection  handles  two different subflows in our experiments. We employed and enhanced  Linux traffic control facilities\footnote{NetEm:{https://wiki.linuxfoundation.org/networking/netem}} and   conduct experiments in the controlled environment not only by varying bandwidth, but also under different asymmetric path loss and delays.  Each of our experiments consists of eight MPTCP connections (with two subflows)  connecting  the  server with  users.  We analyze the results based on several data traffic dynamics upon downloading files of variable size (3MB to 500MB) from the  server. The aggregate throughputs are extracted from tcpdumps, averaged approximately 900 different experiment runs and numbers in the figures represent the average for a connection (unless stated otherwise).

\subsection{Results and Findings}

Our experiments and findings are discussed as follows. Table ~\ref{defination2} describes the organization of four different experiments.

\subsubsection{Experiment I} Our testbed is set  at equal delay of $50$ ms,  equal bandwidths $20$~Mbps and  the same 3\% packet loss probability to quantify the impact on moving-average throughputs for $60$ seconds  upon downloads. We use the same file of size ($30$MB) for all MPTCP algorithms (several experimental runs with \emph{DQL, BALIA, OLIA, LIA} separately using the same setup). Figure~\ref{fig:short} shows the effect on the running average throughputs.

 \begin{figure}[t]
\centering 
\includegraphics[width=3.025 in, height=1.85145 in]{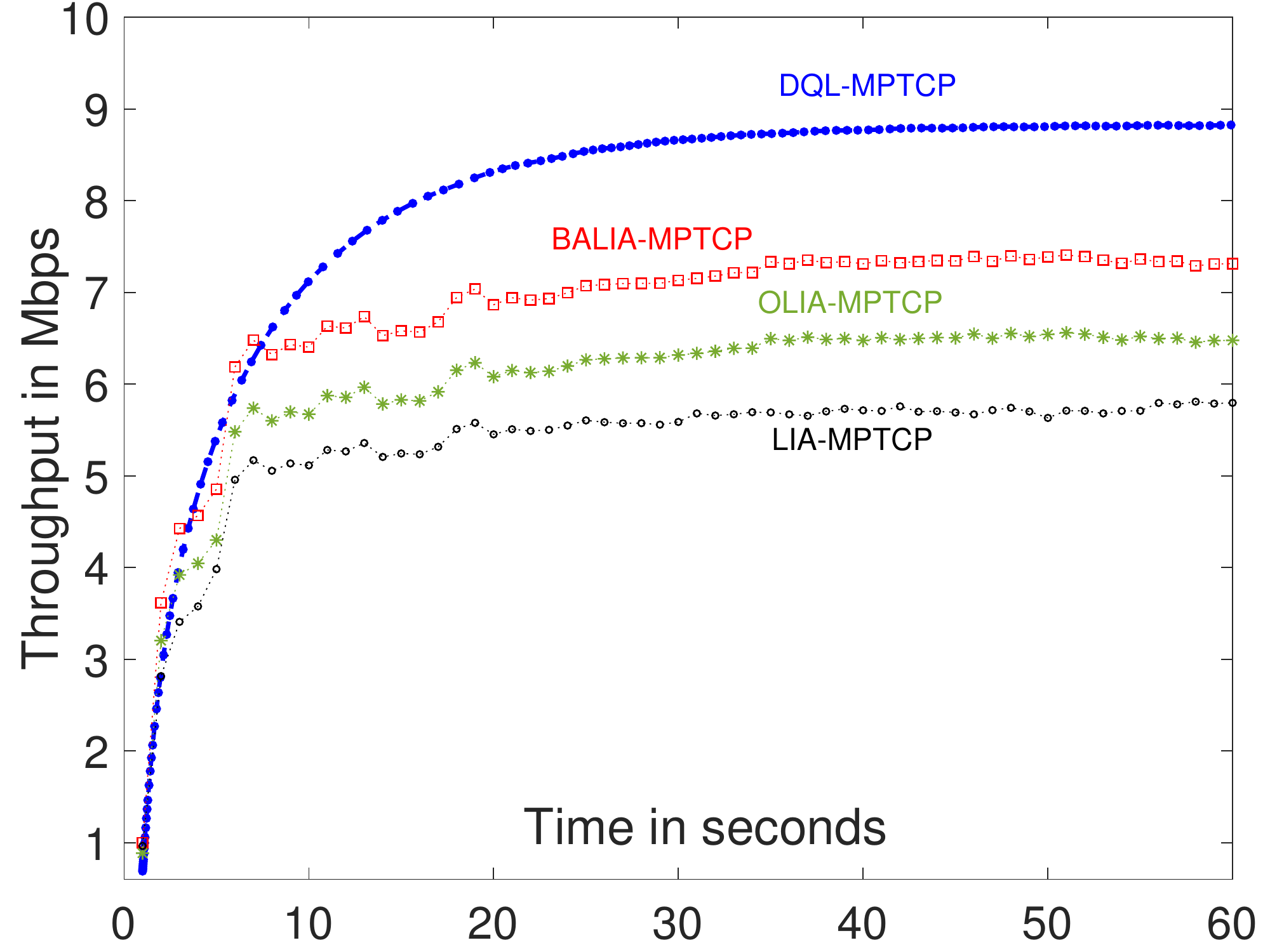}
\caption{\rm Experiment I: For both paths, we set equal delay ($50$ms), equal $3\%$ path loss and equal bandwidth ($10$~Mbps each) while downloading replicas of $600$MB  file. We perform different runs for \emph{DQL, BALIA, OLIA, LIA} with same setup. \label{fig:short}} 
%\vspace{-5 mm}
\end{figure}

 {\sc{Finding 1:}} In terms of perceived throughputs, for small size file transfers and short flows, our DQL-MPTCP outperforms the standard MPTCP algorithms under realistic channel conditions. 

\begin{remark}
{\sc Finding 1} is the joint effect of coupled congestion control and scheduling policy adopted in this    study. Note that the standard algorithms, LIA, OLIA, and BALIA inherent separately designed congestion control and scheduling.
\end{remark}
\subsubsection{Experiment II}
For both paths, we set equal bandwidth (20 Mbps) and equal  packet loss probabilities of 5\%  to see the impact on the aggregate throughput. One path delay increases from 20 ms to 100 ms after every 20 seconds, but the delay of another path is fixed at 20 ms.
%For another Experiment II, we  change  delay for one of the paths after every 20 seconds (increasing by $10$ ms) starting from $20$~ms up to $100$~ms  (other path delay is constant $20$~ms). For equal bandwidths setting ($20$~Mbps) and a equal (random) $5\%$ packet loss probabilities for both paths to see the impact on the aggregate throughput.

Note that the users are downloading the same file of  size $600$ MB but using different MPTCP algorithms (different experimental runs with different versions of MPTCP:  \emph{DQL, BALIA, OLIA, LIA}) while the network setup is exactly the same. Figure~\ref{fig:delay} depicts the aggregate impact on throughputs by varying the delay of one of the path dynamically, which provides us another interesting finding.  
  
  \begin{figure}[t]
\centering 
\includegraphics[width=3.02 in, height=1.85145 in]{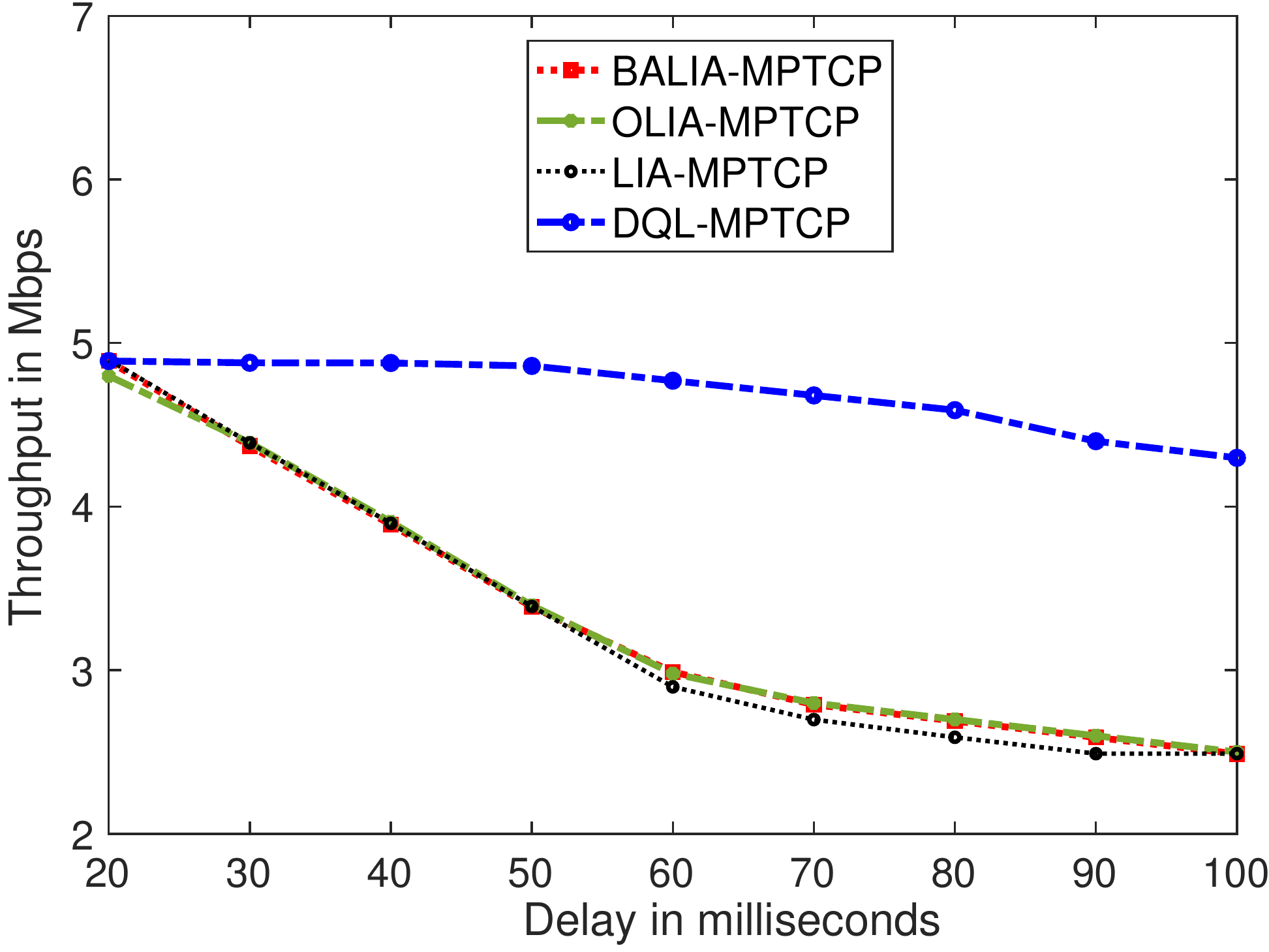}
\caption{\rm Experiment II: Change the delay of one path gradually after every 20 seconds (increasing by 10~ms) starting from 20~ms to 100~ms;  while downloading a large file. Delay over another path is set constant (20~ms). Both paths are set at same bandwidths ($10$~Mbps each) and suffers from the same path loss of $5\%$. %We aim to reveal the impact of dynamic variation of delays on the throughputs obtained by different MPTCP users using the same testbed setup.
\label{fig:delay}} 
%%\vspace{-5 mm}
\end{figure}
  
{\sc{Finding 2:}} By increasing one of the path delay while downloading, the aggregate throughputs obtained by all MPTCP algorithms decreases. This behavior is as expected, but, the decreasing rate of throughputs is quite dissimilar with different MPTCP algorithms. In particular, the slope is quite high with all other MPTCP algorithms but is smallest with DQL-MPTCP.

\begin{remark}
\emph {\sc{Finding 2}} is due to the delayed feedback of standard MPTCP algorithms and time to discover and adapt with the temporal changes in path delay. As a consequence, most  MPTCP algorithms need retransmission and recovery of lost packets. This may also require re-sequencing of received packets at the user end  (all packets need to be delivered in order). However, DQL-MPTCP is robust to delay variations as the DQL agent is learning packet scheduling and congestion control continuously. Further, the improved responsiveness is also due to the use of prioritized experience replay buffer, which accelerates learning and adapts packet rates and schedule across paths quickly. As a result, with these time-varying delay changes, the reordering delay and waiting time for packets transmitted over the slow path, compared to standard MPTCPs, is minimal with DQL-MPTCP. 
\end{remark}

\subsubsection {Experiment III}
For both paths, we fixed the same delays of $150$ ms and $4\%$ packet loss probability, equal for both paths. Then, we vary bandwidths dynamically for only one  of the paths (other path bandwidth is constant 20~Mbps) to see the impact on the average throughputs.  All MPTCPs are downloading same size ($600MB$) file over  the same testbed setup. Figure~\ref{fig:bandwidth} illustrates the impact on throughputs.\\
\begin{figure}[t]
\centering 
\includegraphics[width=3.025 in, height=1.85145 in]{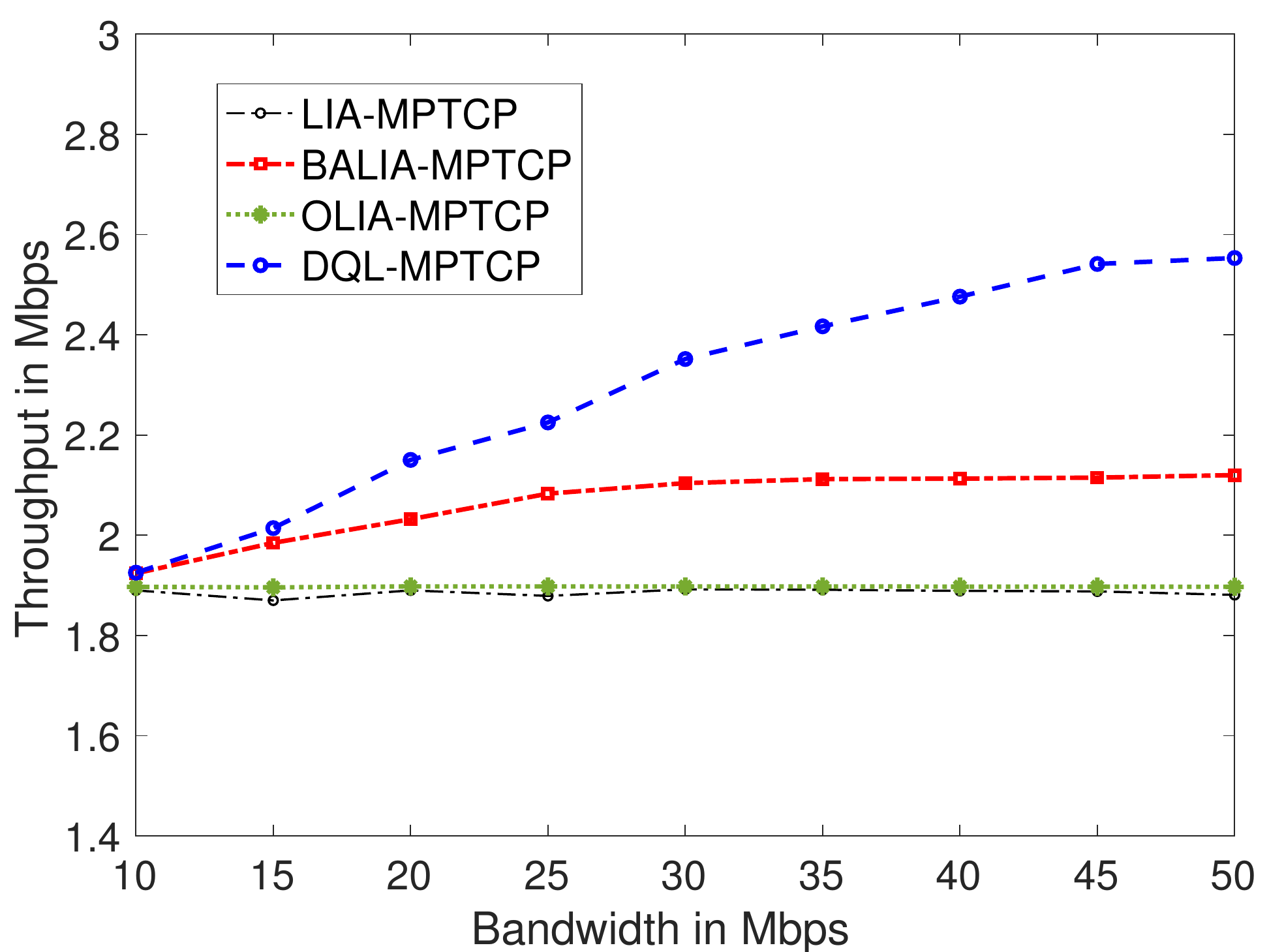}
\caption{\rm Experiment III: We  vary  bandwidths of one path from 10~Mbps to 50~Mbps dynamically fixed another path at 10~Mbps). We set equal delay (150~ms) and same path loss ($4\%$) for both paths to observe the impact of time-varying capacities on performance as perceived by all MPTCP algorithms. \label{fig:bandwidth}} 
%%\vspace{-8 mm}
\end{figure}
{\sc{Finding 3}:} {DQL-MPTCP} obtains the highest throughput. All other state-of-the-art standard MPTCP algorithms fall behind in utilizing dynamically varying bandwidths.\\ 
Figure~\ref{fig:bandwidth} shows that throughputs obtained by standard MPTCP algorithms are fairly low. DQL-MPTCP algorithm outperforms others in the same testbed setting. This is due to the fact that under realistic channel error, the standard MPTCP algorithms are guided by a predefined fixed loss and delay based policy.

\begin{remark}
With the enhanced continuous learning using optimal packet scheduling action of DQL-MPTCP connections, they utilize increasing bandwidth pretty well even under significant magnitude of path loss. More importantly, the underlying LSTM inside the DQL agent of our MPTCP can quickly discover network bandwidth changes and intelligently perform balancing with joint packet scheduling and congestion control. Design of MPTCP algorithm requires balancing the
tradeoff between fairness to single path TCP, when sharing a bottleneck,  and responsiveness in the sense that MPTCP grabs available bandwidth when it becomes available in any of the available subflows~\cite{raiciu2011coupled, peng2016multipath}. 
\end{remark}

As noted in earlier experiments, we observed considerably high responsiveness of DQL-MPTCP. Our next experiment is to study the fairness of DQL-MPTCP to single path TCP.
\subsubsection{Experiment IV}
 Figure~\ref{fig:fig8} illustrates the throughputs comparisons for the TCP and MPTCP connections, when we set single path TCP flows competing with MPTCP connection in one of the paths. 
\begin{figure}[t]
\centering 
\includegraphics[width=3.025 in, height=1.85145 in]{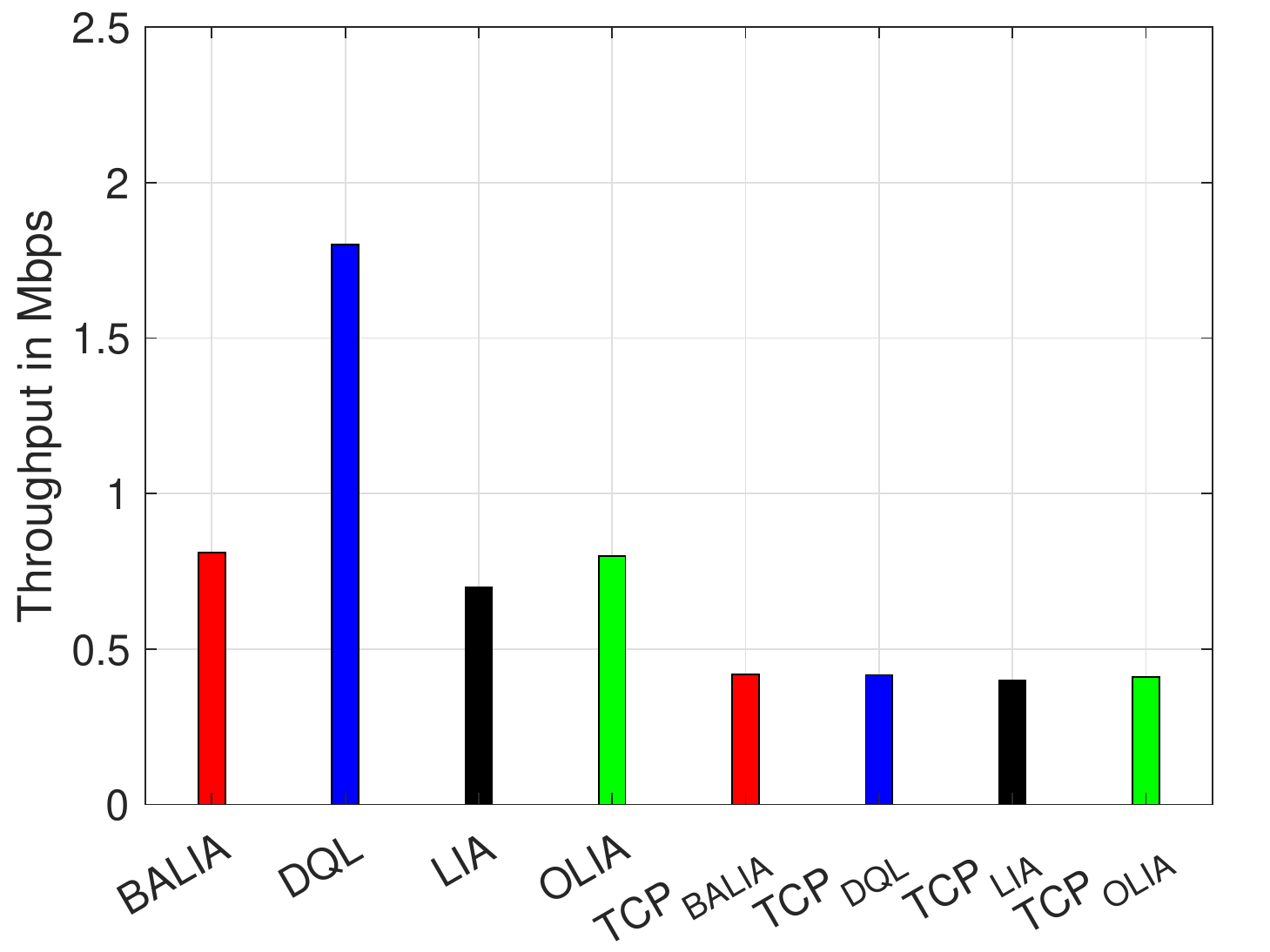}
\caption{\rm Experiment IV setting: MPTCP flows contending with single path TCP. Using the setup of Experiment III, we  conduct experiment with five MPTCP connections  sharing one of the path with rival five single path TCP connections. We repeat same experimental runs using all four different versions of MPTCP: DQL, BALIA, OLIA and LIA algorithms.\label{fig:fig8}} 
%%\vspace{-4 mm}
\end{figure}\\
 Observe in Figure~\ref{fig:fig8} that DQL-MPTCP is fair to rival single path TCP connections in terms of throughputs.  It shows that the throughput perceived by single path TCP is quite close to that when competing with standard MPTCP algorithms; however, the DQL-MPTCP connection achieves higher throughputs. As compared to BALIA vs single path TCP, the throughput obtained by the single path TCP when competing with DQL-MPTCP subflows is negligibly smaller. The reason for this observation (negligibly smaller single path TCP throughput when competing with DQL-MPTCP) requires further  investigations.
 
\section{ \textcolor{black}{Future Directions}}
 \textcolor{black}{With relevant insights from this work, we have identified the following three research problems for future directions.}
 \subsection{\textcolor{black}{Cost assessment Study}}
 \textcolor{black}{The cost assessment study of DQL-MPTCP with relevant insights from this work is another exciting research direction. At a higher layer, we can design a high-level policy framework i) to consider efficient energy consumption and economic data plans (e.g. Cellular vs. WiFi or number of interfaces in 5G and beyond) ii) to determine a set of suitable paths based on cost/energy constraints. Such high-level policy engine could determine the (cost/energy) optimal set of paths.}
\subsection{\textcolor{black}{A hybrid (model- and DQL-based) design}}
 \begin{figure}[b]
\centering 
\includegraphics[width=3.025 in]{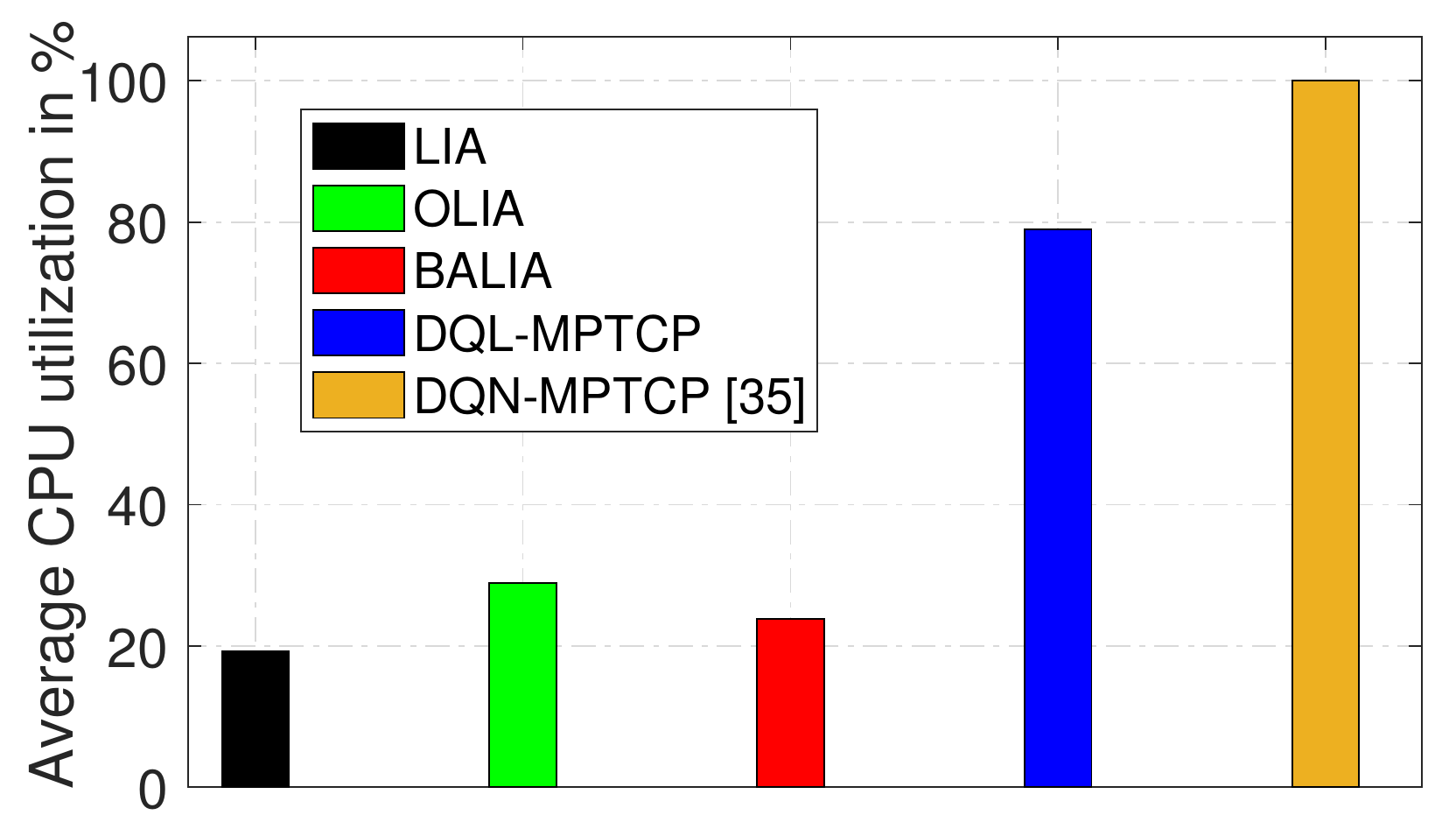}
\caption{\rm   \textcolor{black}{We send traffic from a server to the client (with two 10 Mbps links and 20ms delay) for 200 seconds. We measure the average CPU use on the MPTCP source side to examine learning overhead and compare DQL-MPTCP with other models.
We repeat the same experiment and record iPerf's CPU utilization. We repeat same experimental runs and measure the average \% utilization of the five different versions of MPTCP~\cite{pokhrel2020multipath}.}\label{fig:fig9}} 
\vspace{-4 mm}
\end{figure}
\textcolor{black}{Model-based designs such as LIA, OLIA, BALIA typically have very low computing overhead,
as they adopt a fine-grain scale for their control
loops and respond to every ack. In sharp contrast,
DQL-based designs have potentially high computing
overheads as their feedback loops are longer than
the model-based designs.}

\textcolor{black}{To investigate the overhead cost of DQL-MPTCP, we send traffic from a server to the client (with two 10 Mbps links and 20ms delay) for 200 seconds. We measure the average CPU use on the MPTCP source side to examine learning overhead and compare DQL-MPTCP with other models.
We repeat the same experiment and record iPerf's CPU utilization.}

\textcolor{black}{We remove the first few seconds of the experiment for all designs in order to have a reasonable comparison and limit the effect of the startup phases of different MPTCP designs.
Figure~\ref{fig:fig9} shows results. DQL-MPTCP, as expected, has a high overhead compared to other model-based designs (but lower than the DQN-MPTCP~\cite{pokhrel2020multipath}). Although the current DQL-MPTCP user-space implementation has a high overhead, we think the final optimized kernel version would lower overhead.} 

 \textcolor{black}{This experiment provides new insights and an essential future direction towards a pragmatic and evolutionary strategy of rethinking a hybrid MPTCP (by blending model-based techniques with advanced DQL-based approaches).}
 
 \subsection{\textcolor{black}{Minimizing Training overhead}}
 \textcolor{black}{We trained the DQL agent for over twenty-five thousand transition samples using iPerf3 (\emph{https://ipref.fr}), which generates packets continuously to keep the paths always busy in our network settings. This is useful to generate sufficient training samples. We perform experiments with different settings under variable packet loss, delays and bandwidths, hence our (offline) training time varies (2 to 10 hours). In one of the settings with two paths, the training runs for $3$ hours when we set equal bandwidths $ 10$Mbps each, equal RTTs $550$ms  each and equal packet loss probabilities  $ 0.5$\% for each path. Nevertheless, the outcome is beneficial: the offline training makes our algorithm fully prepared for online deployment by minimizing the startup  delay.}

\textcolor{black}{It is worth noting that offline training is a one time job. Recall that we have used Deep Neural Networks (DNNs) for inference in our DQL-MPTCP implementation, each of which owns only two (hidden) layers. The online inference time observed during our experimentation is just about 0.8ms (which causes smallish overhead for real-time decision process). Retraining our DQL agent is needed only when the network settings changes drastically (e.g., from \emph{high-bandwidth-delay} to \emph{low-bandwidth-delay}). The cause for this is to collect sufficient transition samples needed  to update (and learn) the DNNs and gain sufficient experience to make better decisions when comparable conditions occur in a new network environment. An effective way to conduct retraining of a trained DQL agent to adapt for the new network services, such as ultra reliable low latency communication, will benefits from transfer learning~\cite{pokhrel2021multipath} and needs further investigations~\cite{pokhrel2021urllc}.}

\section{Conclusions}
We developed a novel DQL-based multipath congestion control and packet scheduling scheme using policy gradients and prioritized replay buffer for the future Internet and wireless technologies. It utilizes a DQL agent with policy gradients to jointly perform dynamic packet scheduling and congestion control, and conduct network reward maximization for all active subflows of an MPTCP connection. We provided stability analysis of DQL-MPTCP with relevant practical design insights. Our design consists of training intelligent agent for interface and actuation, which possesses a flexible LSTM-based representation network. The developed architecture appears to be capable of i) learning an effective representation of all MPTCP subflows, ii) jointly performing intelligible coupled congestion control and packet scheduling, and iii) dealing with dynamic network characteristics and time-varying flows. %An effective way to conduct retraining of the agent to adapt for the new network settings will benefits from transfer learning~\cite{pokhrel2021multipath} and needs further investigations in future.

 We observed better performance of DQL-MPTCP when comparing with standard MPTCP algorithms under similar network settings. Essentially, what the DQL policy may amount to is that each MPTCP agent either sends or abstains from sending based on a set of observables and the policy attempts to maximize goodput and maintain low delays over the whole network and for all contending agents. Moreover, we anticipate that the DQL should have accelerated MPTCP in discovering the best policy for the underlying partially observable Markov process (of the network dynamics), which requires further investigations and is left for future work.

%%
%% The next two lines define the bibliography style to be used, and
%% the bibliography file.
\bibliographystyle{IEEEtran}
\bibliography{shiva}
\end{document}